\newcommand{\RR}{\mathbb{R}}
\newtheorem{theorem}{Theorem}
\newtheorem{example}{Example}
\DeclareMathOperator{\vect}{vec}
\newcommand{\argmin}{\operatornamewithlimits{argmin}}
\numberwithin{equation}{section}
\numberwithin{figure}{section}
\numberwithin{table}{section}
\newcolumntype{L}[1]{>{\arraybackslash}p{#1}}
\begin{document}

\thispagestyle{empty}
\begin{center}
\renewcommand{\thefootnote}{\fnsymbol{footnote}}
 \textbf{\large Estimation of a Factor-Augmented Linear Model with Applications Using Student Achievement Data\footnote{The authors would like to thank Irene Botosaru, Rajeev Darolia, Atsushi Inoue, Ana Herrera, Jeff Racine, and Youngki Shin for comments and useful conversations. We thank Chris Heaton for sharing his programs for estimation of the approximate factor model, and Jérôme Adda and Michele Pellizzari for sharing the Bocconi data.}
 ~\\ }
 \vspace{0.2in} Matthew Harding\footnote{Department of Economics, University of California at Irvine, SSPB 3207, Irvine, CA 92697; Email: \texttt{harding1@uci.edu}}, Carlos Lamarche\footnote{Department of Economics, University of Kentucky, 223G Gatton College of Business and Economics, Lexington, KY 40506-0034; Email: \texttt{clamarche@uky.edu}}, and Chris Muris\footnote{Department of Economics, McMaster University, Kenneth Taylor Hall 406, Hamilton, Ontario L8S 4M4; Email: \texttt{muerisc@mcmaster.ca}}
\end{center}
\begin{center}
\today \\
\end{center}



\vspace{.3in} \hrule   \noindent
\begin{center}
\textbf{Abstract}
\end{center}
\baselineskip=.88\baselineskip  {\small

\noindent In many longitudinal settings, economic theory does not guide practitioners on the type of restrictions that must be imposed to solve the rotational indeterminacy of factor-augmented linear models. We study this problem and offer several novel results on identification using internally generated instruments. We propose a new class of estimators and establish large sample results using recent developments on clustered samples and high-dimensional models. We carry out simulation studies which show that the proposed approaches improve the performance of existing methods on the estimation of unknown factors. Lastly, we consider three empirical applications using administrative data of students clustered in different subjects in elementary school, high school and college.}

\vspace{.25in}
 \noindent {\it JEL: C23; C26; C38; I21; I28}  \\
 \noindent {\it Keywords: Factor Model; Panel Data; Instrumental Variables; Administrative data} 

\baselineskip=1.1\baselineskip
 \vspace{.1in} \hrule \vspace{.2in}
\setcounter{page}{0}
\setcounter{footnote}{0}

\clearpage

\section{Introduction}

In recent years, there has been an increasing interest in applications of factor models in economics, finance, and psychology. In economics, the identification and estimation of factor models has received substantial attention in a number of areas from macro-finance to labor economics and development \citep{bernanke2005measuring,kim2014divorce,oAttanasio2020}. Important work has studied the role of cognition, personality traits, and academic motivation on child development \citep{fCunha2007,fCunha2008,Borghans02102008,jHeckman13}. Factor-augmented regressions as in \citet{jStock1999,jStock2002} are known to improve forecasts of macroeconomic time series such as inflation and industrial production. The literature also includes new models for high-dimensional data sets \citep{jBai2016}, and methods for panels with large cross-sectional ($N$) and time-series ($T$) dimensions, following the influential work by \citet{hP06} and \citet{jB09}. In panel data econometrics, one popular interpretation treats the latent factors as a generalization of traditional fixed effects models \citep[among others]{harding2014estimating,Chudik2015393,martin2015,moon2017,tAndo2016,aJuodis2018,tAndo2017,mHarding2020}. 

While the estimators proposed for panels with large dimensions have been widely popular, other methods developed for panels with small, or fixed, $T$ have not been frequently adopted by practitioners conducting empirical academic research. One reason, as mentioned in \citet{aJuodis2020} and illustrated in \citet{oAttanasio2020} and \citet{DelBono2020}, is that identification of the factor model requires normalization restrictions that matter for the interpretation of results \citep{NBERw22441}. In some cases identification is achieved through the use of dedicated measurements, where \textit{a priori} knowledge is used to associate certain measurements uniquely with specific factors (for example a test can be associated uniquely with a given skill e.g. \citet{cunha2021econometrics}).   One common restriction, labeled ``PC3'' in \citet{BAI2013}, normalizes coefficients in the first block of factors. However, a large set of normalizations are available to practitioners when observed measurements per subject do not have a predetermined or natural order. In this paper, we investigate this problem while primarily focusing our analysis on ``fixed-$J$'' panels, where $J$ denotes a number of clusters or groups (e.g., states, counties, schools, etc. as opposed to time series).

We begin our investigation by introducing a class of estimators that use internally generated instruments. Papers by \citet{HEATON2012348}, and \citet{aJuodis2020}, among others, also propose to estimate similar  models using internally constructed instruments, an idea that can be traced back to the work of \citet{madansky1964instrumental}. In contrast to the existing literature, our model is identified based on an alternative non-singular transformation which includes PC3 as a special case. This normalization is convenient for the interpretation of results when economic theory is silent on the type of restrictions that must be imposed to solve the rotational indeterminacy of the factor model. Moreover, we establish large sample results by accommodating the asymptotic theory for clusters developed by \citet{bHansen2019}.

We then consider adopting multiple non-singular transformations to improve the efficiency of the estimator, and we derive two additional theoretical results on estimation. We propose an estimator considering PC3-type restrictions in fixed-$J$ panels and show that the estimator is consistent and asymptotically normal under standard conditions. However, in applications to high-dimensional data or panel data with a large number of clusters, there is an increasing number of available transformations. The number of instrumental variables can also increase with $J$, creating finite sample bias similar to the one generated by the use of too many instruments \citep[see][]{jHahn2003, Jerry2008,pBekker1994}. Our second development is to address poor finite sample performance by proposing an alternative two-step estimator that accommodates econometric methods for high-dimensional models in a first step \citep*[e.g.,][]{Belloni2012,linton2016,fWind2019}. Although our main focus is on fixed-$J$ panels, we establish the asymptotic distribution theory for multiple transformations and demonstrate to practitioners how to select normalizations out of (possibly) an infinite number of them. 

Despite the large body of work on instrumental variables and factor models, this paper develops a new class of estimators that are simple to implement and offer practitioners better performance in small samples. The estimation of slope parameters using instrumental variables is investigated in \citet{baing2010}, \citet{Harding2011197}, \citet{ahn2013panel}, \citet{Robertson2015526}, \citet{aJuodis2020}, and \citet{NORKUTE2021416}, among others. On the other hand, the latent factor structure is estimated in \citet{madansky1964instrumental}, \citet{GH1982}, \citet{sPudney81}, \citet{jHeckman1987}, and \citet{HEATON2012348}. In our simulation study, we find that the proposed estimators improve on the performance of existing instrumental variable methods for the estimation of unknown factors. 

Lastly, we consider three empirical applications of our method to the estimation of models of educational attainment using administrative data on students. First, we investigate how the distribution of students' abilities at a school district level changes over subsequent years of K12 education. We present evidence on the temporal and geographic variability of educational opportunity across the US using administrative data from over 11,000 school districts. In our second illustration of the approach, we estimate a factor model using administrative data from a higher education institution in Europe \citep{DeGiorgi2012}. The third application employs data from \citet{jAngrist2002} to evaluate the impact of an educational voucher program implemented in Latin America. These examples show intriguing results and highlight the usefulness of our techniques in varied settings in order to identify the strong and weak performers across the unobserved dimensions of academic achievement. 

This paper is organized as follows. Section 2 introduces the factor-augmented linear model and the proposed estimator. The section also presents the main theoretical result and discusses the implementation of the estimator. Section 3 investigates estimation under multiple normalizations. Section 4 provides Monte Carlo experiments to investigate the small sample performance of the proposed estimators. Section 5 demonstrates how the approaches can be used in practice by exploring applications using administrative data. Section 6 concludes. Mathematical proofs are offered in the Appendix. 

\section{Model and estimation}

This paper considers the following factor-augmented linear model for $i=1,\hdots,N$ subjects and $j=1,\hdots,J$ clusters: 
\begin{equation}
y_{ij}=\bm{x}_{ij}'\bm{\beta}+\bm{\lambda}_{i}'\bm{f}_{j}+u_{ij}, \label{panel}
\end{equation}
where $y_{ij} \in \RR$ is the $j$-th response variable for subject $i$, $\bm{x}_{ij} \in \RR^p$ is a vector of independent variables, $\bm{\beta} \in \mathcal{B} \subseteq \RR^p$ is an unknown parameter vector, $\bm{\lambda}_{i} = (\lambda_{i1},\lambda_{i2},\hdots,\lambda_{ir})' \in \RR^r$ is a vector of factor loadings, $\bm{f}_{j} = (f_{j1},f_{j2},\hdots,f_{jr})' \in \RR^r$ is a vector of latent factors, and $u_{ij}$ is an error term. The number of factors $r$ does not need to be known, as one can determine the number of factors following a number of approaches \citep[e.g.,][]{jB2002,aOnatski2010,gKapetanios2010,sAhn2013,lTrapani2018}.

We are interested in the estimation of $\bm{\beta}$ and $\bm{f}_{j}$. For the results in this section, we will fix a subset $A_0$ of groups of interest, and estimate $\left(\bm{f}_{j},j\in A_0\right)$. Throughout this section, even as $J$ diverges, this subset remains fixed. Once estimators of the factors and of $\bm{\beta}$ are available, it is straightforward to construct an estimator for $\bm{\lambda}_{i}$ \citep[see, e.g.,][]{HEATON2012348,BAI2013}. In Section \ref{empirical-application}, as an illustration of the approach, we first concentrate our attention on estimation of the factor $\bm{f}_j$, and then we estimate the loading $\bm{\lambda}_i$ for $1 \leq i \leq N$. 



Based on equation \eqref{panel}, consider 
\begin{equation}
\bm{y}_{i} = \left(\bm{y}_{iA_{0}}', \bm{y}_{iA_{J} \setminus A_0}', \bm{y}_{iB_{J}}' \right)',
\end{equation}
where $A_J$ is a set that includes groups that are used to proxy the vector of loadings $\bm{\lambda}_i$, and $B_J$ is a set that includes groups that are used to generate instrumental variables. The number of elements in each set $S$ is denoted by $m_S$, and we require $m_{A_J} \geq r$, and $m_{B_J} \geq r,$ and $\left(A_0 \cup A_J \right) \cap B_J = \emptyset$. We will also require that $A_J \cap A_0 = \emptyset$, although this can be relaxed at the cost of additional notation and subtleties. For instance, we could require $m_{A_J \setminus A_0} \geq r$, and that there is at least one $j \in A_J \setminus A_0$ involved in each of the $r$ averages discussed below. It follows that, 
\begin{equation}
\bm{y}_{i A_J}=\bm{x}_{iA_J}'\bm{\beta}+\bm{f}_{A_J}'\bm{\lambda}_{i}+\bm{u}_{iA_J}, \label{panel:AT}
\end{equation}
where $\bm{y}_{i A_J}$ is an $m_{A_J} \times 1$ vector of response variables, $\bm{x}_{iA_J} = (\bm{x}_{ij})_{j \in A_J}$ is a $p \times m_{A_J}$ matrix of independent variables, $\bm{f}_{A_J} = (\bm{f}_{j})_{j \in A_J}$ is a $r \times m_{A_J}$ matrix of latent factors, and $\bm{u}_{iA_J}$ is a $m_{A_J} \times 1$ error term. 

Let $\bm{D} = \bm{I}_r \otimes \bm{\iota}_{m_r}$, where $\bm{I}_r$ is the identity matrix of dimension $r$, $\bm{\iota}_{m_r}$ is a vector of ones of dimension $m_r = m_{A_J}/r$, and $\otimes$ denotes Kronecker product. We assume, for simplicity, that the number of groups per factor $m_r$ is an integer, because practitioners can always reorder elements after discarding those not in $A_{J} \cup B_{J}$. Let $\mathbb{M} = (\bm{D}' \bm{D})^{-1} \bm{D}'$ be a $r \times m_{A_J}$ matrix that creates $r$ averages of variables considering $m_r$ observations. Multiplying equation \eqref{panel:AT} by $\mathbb{M}$, we obtain the following $r$ equations: 
\begin{equation}
\overline{\bm{y}}_{i A_J} = \overline{\bm{x}}_{iA_J}'\bm{\beta}+\overline{\bm{f}}_{A_J}'\bm{\lambda}_{i}+\overline{\bm{u}}_{iA_J},  \label{average:AT}
\end{equation}
where, for instance, $\overline{\bm{y}}_{i A_J} = \mathbb{M} \bm{y}_{i A_J}$ denotes the vector of $r$ possible sample averages considering the elements of the vector $\bm{y}_{i A_J}$. Assuming that the $r \times r$ matrix $\overline{\bm{f}}_{A_J}'$ is invertible, we can solve for $\bm{\lambda}_i$:
\begin{equation}
  \bm{\lambda}_{i} =  \left[ \overline{\bm{f}}_{A_J}' \right]^{-1} \left( \overline{\bm{y}}_{i A_J} - \overline{\bm{x}}_{iA_J}'\bm{\beta} - \overline{\bm{u}}_{iA_J} \right).  \label{lambdai}
\end{equation}

Substituting equation \eqref{lambdai} into the augmented factor model \eqref{panel}, one obtains, for each $j \in A_0$, 
\begin{align}
  y_{ij} &= \bm{x}_{ij}' \bm{\beta} + \bm{f}_j' \left[ \overline{\bm{f}}_{A_J}' \right]^{-1} \left( \overline{\bm{y}}_{i A_J} - \overline{\bm{x}}_{iA_J}'\bm{\beta} - \overline{\bm{u}}_{iA_J} \right) + u_{ij} \nonumber \\ 
         &= \overline{\bm{y}}_{i A_J}' \bm{\theta}_{j} + \bm{x}_{ij}' \bm{\beta} - \bm{\theta}_{j}' \overline{\bm{x}}_{i A_J}' \bm{\beta}  
      + ( u_{ij} - \bm{\theta}_{j}' \overline{\bm{u}}_{iA_J}),
\end{align}
where $\bm{\theta}_{j} = \overline{\bm{f}}_{A_J}^{-1} \bm{f}_j$. We emphasize that the parameter depends on the normalization $A_J$ but we omit the dependence to keep the notation simple. By noting that $\bm{\theta}_{j}' \overline{\bm{x}}_{i A_J}' \bm{\beta} = \sum_{k=1}^r \overline{\bm{x}}_{i A_J, k}' \bm{\beta} \theta_{t,k}$, we can write,
\begin{equation}
  y_{ij} =  \overline{\bm{y}}_{i A_J}' \bm{\theta}_{j}  + \bm{x}_{ij}' \bm{\beta} + \overline{\bm{X}}_{i A_J}' \bm{\gamma}_j + v_{ij},  \label{eq:reduced_form}
\end{equation}
where $\overline{\bm{X}}_{i A_J}$ is a vector of $p \times r$ independent variables, the vector $\bm{\gamma}_j = ( \bm{\beta}' \theta_{j,1},..., \bm{\beta}' \theta_{j,r})'$, and $v_{ij} = u_{ij} - \bm{\theta}_{j}' \overline{\bm{u}}_{iA_J}$. Although $\bm{\theta}_j$ and $\bm{\beta}$ could be estimated by standard methods for linear models, the variable in the first term of equation \eqref{eq:reduced_form}, $\overline{\bm{y}}_{iA_J}$, is endogenous because it is correlated with $\overline{\bm{u}}_{iA_J}$, which appears as part of the error term. 

We propose to estimate equation \eqref{eq:reduced_form} using internal instruments $\overline{\bm{y}}_{iB_J}$, as well as an expanded set of instruments, as described below. The assumptions imposed below imply that $\overline{\bm{y}}_{iB_J}$ is a strong and valid instrument (see the discussion after the main result). An additional challenge is related to inference because we are interested in estimating simultaneously $\bm{f}_j$ for all $j \in A_{0}$. We proceed by stacking the reduced form equation \eqref{eq:reduced_form}. To handle the dependence across equations $j \in A_0$ for a given $i$ within the system, we use the asymptotic theory for clusters in \citet{bHansen2019}.

Recall that the number of equations $m_{A_0}$ is fixed, in the sense that it does not diverge if $J$ does. The system of $m_{A_0}$ equations can be written as:
\begin{align}
\bm{y}_{iA_0} &=  \left( \bm{I}_{m_{A_0}} \otimes \overline{\bm{y}}_{iA_J}' \right) \bm{\theta}_{A_{0}}
 + \bm{X}_{iA_{0}} \bm{\beta}
 + \left( \bm{I}_{m_{A_0}} \otimes \overline{\bm{X}}_{iA_J} ' \right) \gamma_{A_{0}}
 + \bm{v}_{iA_{0}} \\
 &=: \bm{M}_{iA_0} \bm{\delta} + \bm{v}_{iA_0}, \label{eq:reduced_system}
\end{align}
where $\bm{X}_{iA_{0}} = (\bm{x}_{i1}', \bm{x}_{i2}', ..., \bm{x}_{im_{A_0}}')'$ is a $m_{A_0} \times p$ matrix of exogenous variables, and $\bm{v}_{iA_{0}}$ is a $m_{A_0}$ dimensional vector with typical element $u_{ij} - \bm{\theta}_j' \overline{\bm{u}}_{iA_J}$. The parameter $\bm{\delta} = (\bm{\theta}_{A_{0}}', \bm{\beta}', \bm{\gamma}_{A_{0}}')'$, where $\bm{\theta}_{A_{0}} = (\bm{\theta}_{1}', \bm{\theta}_{2}', \hdots, \bm{\theta}_{m_{A_0}}')'$ and $\bm{\gamma}_{A_{0}} = (\bm{\gamma}_{1}', \bm{\gamma}_{2}', \hdots, \bm{\gamma}_{m_{A_0}}')'$. The total number of parameters in the system of equations \eqref{eq:reduced_system} is $k_{A_0} := m_{A_0} r (1+p)+p$. 

The Grouped Variable Estimator (GVE) can be obtained as:
\begin{footnotesize}
\begin{equation}
\widehat{\bm{\delta}}  =  \left(\sum_{i=1}^{N}\bm{M}_{iA_0}'\bm{Z}_{iA} \left(\sum_{i=1}^{N}\bm{Z}_{iA}'\bm{Z}_{iA}\right)^{-1}  \sum_{i=1}^{N}\bm{Z}_{iA}'\bm{M}_{iA_0}\right)^{-1}  \sum_{i=1}^{N}\bm{M}_{iA_0}'\bm{Z}_{iA}\left(\sum_{i=1}^{N}\bm{Z}_{iA}'\bm{Z}_{iA}\right)^{-1}\sum_{i=1}^{N}\bm{Z}_{iA}'\bm{y}_{iA_{0}},\label{eq:2sls_for_afm}
\end{equation}%
\end{footnotesize}%
where $\bm{Z}_{iA}$ denote a matrix of internally generated instruments. For instance, stacking the instrumental variables analogously, we obtain the instrumental variables
\begin{equation}
\bm{Z}^{(1)}_{iA}=\left[ \left( \bm{I}_{m_{A_0}} \otimes \overline{\bm{y}}_{iB_J}' \right) 
                          \quad 
                          \bm{X}_{iA_{0}}
                          \quad
                          \left( \bm{I}_{m_{A_0}} \otimes \overline{\bm{X}}_{iA_J}' \right)
                          \right]. \label{eq:instruments_1}
\end{equation}
where $\overline{\bm{y}}_{iB_J}$ is a $r$-dimensional vector of individual specific averages. The assumptions we maintain below actually imply a richer set of instruments, namely
\begin{equation}
\bm{Z}^{(2)}_{iA}=\left[ \left( \bm{I}_{m_{A_0}} \otimes \bm{y}_{iB_J}' \right)  \quad  \left(\bm{I}_{m_{A_0}} \otimes \bm{x}_{i}\right) \right]. \label{eq:instruments_2}
\end{equation}
The first set of instruments, $\bm{Z}^{(1)}_{i,A}$, leads to a just-identified IV estimator, regardless of (a potentially divergent) $m_{A_J}$. The second set of instruments $\bm{Z}^{(2)}_{i,A}$ is larger, and the number of elements will diverge if $m_{A_J}+m_{B_J}$ diverges.

\subsection{Identification}

In a factor model, $\bm{\lambda}_i$ and $\bm{f}_j$ are identified up to a non-singular transformation. To see this, note that the second term in equation \eqref{panel} corresponding to the partition $\bm{y}_{iA_0}$ can be written as $\bm{f}_{A_0}' \bm{A} \bm{A}^{-1} \bm{\lambda}_{i}$ for any non-singular $\bm{A}$ matrix of dimension $r \times r$. \citet{BAI2013} and \citet{bWilliams2020} discuss restrictions imposed to achieve point identification of factors and loadings. One set of restrictions on the $r^2$ free parameters is to normalize the upper $r \times r$ block of a matrix of loadings or factors. Thus, in the case that $m_{A_J}=r$, it is standard to consider $\bm{A} = \bm{f}_{A_J}^{-1}$, which has been used for identification using instrumental variables in \cite{HEATON2012348}, \citet{jHeckman1987}, and \citet{sPudney81}, among others. In these models, the first $r$ factors are normalized to one. 

\begin{example} \label{example0}
Consider equation \eqref{panel} with no regressors, $r=1$, $A_0 = 1$, $A_J = \{2,3\} $, and $B_J = \{4,5\}$. In this case, $m_r = 2$. The response variable in equation \eqref{average:AT} is $\bar{y}_{iA_J} = (y_{i2}+y_{i3})/2$ and $\bar{f}_{A_J} = (f_{2}+f_{3})/2$. The parameter $\theta_{A_0} = m_r \left( \sum_{j=1}^{m_r} f_{j+1}^{-1} f_1 \right)$, or simply $\theta_{A_0} = 2 f_1 / (f_2 + f_3)$. 
\end{example} 

Note that $\bm{\theta}_{j} = \overline{\bm{f}}_{A_J}^{-1} \bm{f}_j$ uses a different non-singular transformation than the one typically considered in the context of instrumental variables. Our transformation for the linear factor model leads to a normalization based on average of factors, which is convenient in terms of interpretation. After the factor model is estimated by \eqref{eq:2sls_for_afm}, we can employ transformations to uncover a simpler parameter structure. As an illustrative example, we can consider Example \ref{example0}. In this case,
\begin{equation*}
\frac{\theta_{A_0}}{\theta_{A_0} + m_r} = \frac{f_1}{f_1 + f_2 + f_3},
\end{equation*}  
showing that a simple reparametrization identifies the relative importance of the first factor. Naturally, there are other non-singular transformations that can be considered including $\bm{A} = \bm{f}_{A_J}^{-1}$, as discussed in the next section.

To think about identification of factors using instrumental variables, it is instructive to consider a special case when $m_r = 1$. 

\begin{example} \label{example1}
Suppose $y_{ij}$ in equation \eqref{panel} is the grade of student $i$ in mathematics, reading, and writing, and we are interested in estimating how teacher's quality affects academic performance (see Section \ref{Colombia}). For simplicity, consider a simple case with no regressors, $r=1$, $A_0 = 1$ and $A_J = 2 $, and $B_J = 3$. Then, $\theta_{A_0} = \theta_1 = f_1 / f_2$, and the estimator defined in \eqref{eq:2sls_for_afm} is:
\begin{equation}
\hat{\theta}_1=\frac{\frac{1}{N}\sum_{i=1}^{N} y_{i3} y_{i1}}{\frac{1}{N}\sum_{i=1}^{N} y_{i3} y_{i2}}. \label{fiv}
\end{equation}
\end{example}

\subsection{Large sample results}

In this section, we establish conditions under which the estimator in \eqref{eq:2sls_for_afm} is consistent and asymptotically normal. We will leverage the fact that our estimator can be viewed as a two stage least squares estimator for clustered data, where the cross-section units $i$ are the clusters; the measurements $j$ are observations within a cluster; the dependent variables are $\bm{y}_{iA_0}$; endogenous regressors are $\bm{y}_{iA_J}$; and so on. This allows us to use the asymptotic theory for clustered samples in \cite{bHansen2019}, in particular their results for two stage least squares estimation in Theorems 8 and 9.

To state our results, define
\begin{align*}
\bm{Q}_{N} & =\frac{1}{N\times m_{A_0}}\sum_{i=1}^{N}E\left[\bm{Z}_{iA}'\bm{M}_{iA_0}\right],\\
\bm{W}_{N} & =\frac{1}{N\times m_{A_0}}\sum_{i=1}^{N}E\left[\bm{Z}_{iA}'\bm{Z}_{iA}\right],\\
\bm{\Omega}_{N} & =\frac{1}{N\times m_{A_0}}\sum_{i=1}^{N}E\left[\bm{Z}_{iA}' \bm{v}_{iA_0} \bm{v}_{iA_0}' \bm{Z}_{iA}\right],\\
\bm{V}_{N} & =\left(\bm{Q}_{N} \bm{W}_{N}^{-1}\bm{Q}_{N}\right)^{-1}\bm{Q}_{N} \bm{W}_{N}^{-1}\bm{\Omega}_{N}\bm{W}_{N}^{-1}\bm{Q}_{N}\left(\bm{Q}_{N} \bm{W}_{N}^{-1} \bm{Q}_{N}\right)^{-1}.
\end{align*}

\begin{theorem}\label{th:HL_version}
Considering $\bm{Z}_{iA} = \bm{Z}^{(1)}_{iA}$ as in \eqref{eq:instruments_1}, if  

\noindent (a) $\left\{\left(y_{ij},\bm{x}_{ij}\right), j = 1,\cdots,J \right\}$ is independent across $i=1,\cdots,N$ conditional on $\bm{f}_j,j=1,\cdots,J$,
is generated by the factor model \eqref{panel},
and the choice of $A_{0},A_{J},B_{J}$ satisfies the restrictions outlined above,

\noindent (b) $E\left(\bm{\lambda}_{i}u_{ij}\right)=\bm{0}$ for all $i$ and for all
$j\in A_{J}$; $E\left(u_{ih}u_{ij}\right)=0$ for all $i$ and for
all $h\in A_{J}$ and $j \in B_{J}$; $E\left(\bm{x}_{ih}u_{ij}\right)=\bm{0}$
for all $i$ and for all $h,j \in A_{J}$, 

\noindent (c) for some $s>2$, $\sup_{i,j} E\left|y_{ij}\right|^{2s}<\infty$
and $\sup_{i,j,k}E\left|x_{ijk}\right|^{2s}<\infty$, 

\noindent (d) the matrix $\overline{\bm{f}}_{A_J}^{-1}$ is invertible,

\noindent and

\noindent (e) $\bm{Q}_{N}$ has full rank, $\lambda_{\text{min}}\left(\bm{\Omega}_{N}\right)\geq\lambda>0$,
and $\lambda_{\text{min}}\left(\bm{W}_{N}\right)\geq K>0$, where the smallest eigenvalue is denoted by $\lambda_{\text{min}}(\cdot)$.

\noindent Then, as $N \to \infty$, the estimator defined in \eqref{eq:2sls_for_afm} is consistent, $\widehat{\bm{\delta}}\stackrel{p}{\to}\bm{\delta}$, 
and
\begin{equation*}
\bm{V}_{N}^{-1/2}\sqrt{N \times m_{A_0}}\left(\widehat{\bm{\delta}}-\bm{\delta}\right)\stackrel{d}{\to}\mathcal{N}\left(0,\bm{I}\right).
\end{equation*}
\end{theorem}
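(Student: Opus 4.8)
The plan is to recognize the GVE in \eqref{eq:2sls_for_afm} as exactly the clustered two stage least squares estimator analyzed in \citet{bHansen2019}, and then to check that hypotheses (a)--(e) deliver the conditions of their Theorems 8 and 9. Under this mapping the cross-sectional unit $i$ plays the role of a cluster, the indices $j \in A_0$ are the observations within cluster $i$, the dependent variable is $\bm{y}_{iA_0}$, the regressor matrix is $\bm{M}_{iA_0}$, and the instrument matrix is $\bm{Z}^{(1)}_{iA}$, with estimand $\bm{\delta}$ from the stacked system \eqref{eq:reduced_system}. A useful preliminary observation is that the instruments in $\bm{Z}^{(1)}_{iA}$ are observed averages rather than generated regressors, so no first-stage estimation error enters and the estimator is genuinely a standard IV/2SLS object. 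Assumption (a) makes the clusters independent (though heterogeneous) conditional on the factors, which is the sampling scheme underlying \citet{bHansen2019}, and the relevant regime is their many-clusters, fixed-cluster-size case: $N \to \infty$ with $m_{A_0}$ fixed, matching the normalization by $N \times m_{A_0}$ in $\bm{Q}_N, \bm{W}_N, \bm{\Omega}_N$ and the rate $\sqrt{N\times m_{A_0}}$.

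First I would verify the moment (exogeneity) condition $E[\bm{Z}_{iA}'\bm{v}_{iA_0}] = \bm{0}$. Conditioning on the factors, $\bm{\theta}_j = \overline{\bm{f}}_{A_J}^{-1}\bm{f}_j$ is nonrandom (using invertibility from (d)), so $v_{ij} = u_{ij} - \bm{\theta}_j'\overline{\bm{u}}_{iA_J}$ is a fixed linear combination of idiosyncratic errors indexed by $A_0$ and $A_J$. Writing the instrument as $\overline{\bm{y}}_{iB_J} = \overline{\bm{x}}_{iB_J}'\bm{\beta} + \overline{\bm{f}}_{B_J}'\bm{\lambda}_i + \overline{\bm{u}}_{iB_J}$, its cross-product with $\bm{v}_{iA_0}$ generates three kinds of terms, each annihilated by one condition in (b): loading-error terms by $E(\bm{\lambda}_i u_{ij})=\bm{0}$, error-error terms across the disjoint blocks $B_J$ and $A_J\cup A_0$ by $E(u_{ih}u_{ij})=0$, and regressor-error terms by $E(\bm{x}_{ih}u_{ij})=\bm{0}$; the exogenous blocks $\bm{X}_{iA_0}$ and $\overline{\bm{X}}_{iA_J}$ are orthogonal to $\bm{v}_{iA_0}$ for the same reason. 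The disjointness $(A_0\cup A_J)\cap B_J=\emptyset$ is precisely what guarantees that the errors entering the instruments are distinct from those entering $\bm{v}_{iA_0}$, so the moments vanish block by block.

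Next I would record relevance and regularity. Relevance is the full-rank requirement on $\bm{Q}_N$ in (e); its content is that both the instrument average $\overline{\bm{y}}_{iB_J}$ and the endogenous regressor average $\overline{\bm{y}}_{iA_J}$ load on the same loadings $\bm{\lambda}_i$ through the nonrandom factor blocks $\overline{\bm{f}}_{B_J}$ and $\overline{\bm{f}}_{A_J}$, so cross-cluster variation in $\bm{\lambda}_i$ induces correlation between instruments and regressors. The moment bounds in (c) provide the uniform integrability needed for the law of large numbers applied to $N^{-1}\sum_i \bm{Z}_{iA}'\bm{M}_{iA_0}$ and $N^{-1}\sum_i \bm{Z}_{iA}'\bm{Z}_{iA}$, and the Lindeberg condition for the central limit theorem applied to $N^{-1/2}\sum_i \bm{Z}_{iA}'\bm{v}_{iA_0}$; the eigenvalue lower bounds on $\bm{\Omega}_N$ and $\bm{W}_N$ in (e) keep $\bm{V}_N$ well defined and bounded away from singularity. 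Consistency then follows from the clustered continuous-mapping/Slutsky argument of their Theorem 8, and the stated normality from their Theorem 9.

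The step I expect to be the main obstacle is the bookkeeping in the moment condition: because $\bm{v}_{iA_0}$ mixes $u_{ij}$, $j\in A_0$, with the averaged errors $\overline{\bm{u}}_{iA_J}$, while the instrument simultaneously carries loadings, factors, and $\overline{\bm{u}}_{iB_J}$, one must confirm that every loading-error, error-error, and regressor-error cross term is killed by exactly one orthogonality condition across all three index sets. A secondary but genuine subtlety is the conditioning scheme: since (a) conditions on $\{\bm{f}_j\}$ and treats $\bm{\lambda}_i$ and $u_{ij}$ as the source of randomness, I would state the moment and eigenvalue conditions conditionally on the factors, verify that \citet{bHansen2019} applies to this heterogeneous-cluster conditional distribution, and then pass to the unconditional conclusion.
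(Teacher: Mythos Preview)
Your proposal is correct and follows essentially the same route as the paper: map the GVE to the clustered 2SLS framework of \citet{bHansen2019} with $i$ as cluster and $j\in A_0$ as within-cluster observations, verify instrument exogeneity $E[\bm{Z}_{iA}'\bm{v}_{iA_0}]=\bm{0}$ by decomposing the cross-products into loading--error, error--error, and regressor--error terms annihilated by (b), and then check their moment, full-rank, and eigenvalue conditions via (c) and (e). The paper's only additional explicit detail is a $c_r$-inequality computation showing that the averaged instruments inherit the $2s$th-moment bound from (c), which you gesture at but do not spell out; your remark on the conditional-on-factors scheme is a sensible refinement that the paper leaves implicit.
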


\vspace{3mm}

The proof of Theorem \ref{th:HL_version} is presented in Appendix \ref{app:proofs}, and consists of verifying the conditions for Theorems 8 and 9 in \citet{bHansen2019}. Their requirement that the observations from each $i$ are asymptotically negligible (cf. their Assumption 1) for consistency is automatically satisfied, as our panel is balanced by assumption. Moreover, the condition stated in Assumption 2 in \citet{bHansen2019} requires that $m_{A_0} / N \to 0$, which is satisfied because $m_{A_0}$ does not grow with $J$. The asymptotic variance $\bm{V}_N$ can be consistently estimated in the usual way (see Theorem 9 in \citet{bHansen2019}).

The result in Theorem \ref{th:HL_version} is obtained considering several standard assumptions. Following Assumption (a), data are generated by model \eqref{panel}. Assumption (b) guarantees that the instruments are valid by requiring that the error terms in two partitions are not correlated. Assumption (c) is a boundedness condition on the regressors and outcome variable that allows for distributional heterogeneity, and is sufficient for \citet{bHansen2019}'s central limit theorem. Assumption (d) controls the behavior of the $\bm{f}_j$, which is part of the estimand and Assumption (e) asks for sufficient correlation of the instruments with the regressors.

\begin{example} \label{example1b}
It is straightforward to see that the estimator in \eqref{fiv} is consistent. First, under assumptions stated in Theorem \ref{th:HL_version}, note that
\begin{equation*}
\frac{1}{N}\sum_{i=1}^{N} y_{i3} y_{i1} = f_{3} f_{1}\frac{1}{N}\sum_{i=1}^{N} \lambda_{i}^{2}+ f_{3}\frac{1}{N}\sum_{i=1}^{N} \lambda_{i} u_{i1} + f_{1} \frac{1}{N}\sum_{i=1}^{N}  \lambda_{i} u_{i3} +\frac{1}{N}\sum_{i=1}^{N} u_{i1}u_{i3} \stackrel{p}{\rightarrow} f_{3} f_{1} E\left(\lambda_{i}^{2}\right).
\end{equation*}
A similar derivation can be employed for the numerator in equation \eqref{fiv} to find that $N^{-1} \sum_{i=1}^{N} y_{i3} y_{i2} \stackrel{p}{\to} f_{3} f_{2} E\left(\lambda_{i}^{2}\right)$. As a result, $\hat{\theta}_1 \stackrel{p}{\to} f_{1}/f_{2} =: \theta_1$.
\end{example}

The result of Theorem \ref{th:HL_version} holds under $J$ fixed or $J \to \infty$ because the number of parameters $k_{A_0}$ does not depend on $m_{A_J}$ or $m_{B_J}$,\footnote{Recall that we assume that $m_{A_0}$ does not diverge if $J$ does.} and the number of instruments $\bm{Z}^{(1)}_{iA}$ does not diverge with $J$. The estimator in Theorem \ref{th:HL_version} uses a fixed number of averages and not all the available internally generated instruments. If $J$ is fixed, it is straightforward to see that the result also holds for $\bm{Z}_{iA} = \bm{Z}^{(2)}_{iA}$ as in equation \eqref{eq:instruments_2}. The case of $\bm{Z}_{iA}^{(2)}$ and $J \to \infty$ is different, as the number of internal instruments increases with $J$, and therefore, the estimator \eqref{eq:2sls_for_afm} faces similar challenges to the ones found in the estimation of high-dimensional models \citep{svanderGeer2010, Belloni2012,fWind2019}. We investigate the large sample behavior of the estimator in Section \ref{sec:largeJ}.

The GVE estimator we propose have a number of attractive features. First, they are trivial to implement: they are 2SLS estimators with instruments \eqref{eq:instruments_1} or \eqref{eq:instruments_2} in the linear system \eqref{eq:reduced_system}. 
Second, the estimator with instruments $\bm{Z}_{iA}^{(1)}$ has the attractive property that it is consistent without any restrictions on the rate at which $J$ grows with $N$ while also being fixed-$J$ consistent.
Third, the estimator combine information from all units in $A_J$ and $B_J$, which are chosen by the researcher and are allowed to diverge.
Fourth, existing solutions for handling missing data in 2SLS settings can be used to handle unbalanced panels.
Fifth, we can easily accommodate regressors $\bm{x}_{ij}$ correlated with the error term $u_{ij}$ by using external instruments. Below we explore further the performance of our estimation approach in large $J$ settings and provide alternatives with improved finite sample performance.

A drawback of our approach is that it may not incorporate the information in the model efficiently. We offer the following two refinements, leaving careful study of their asymptotic properties for future research. 

First, note that our estimators are for the parameters 
$(\bm{\theta}_{jA_0}',\bm{\beta}',\bm{\gamma}_{jA_0}')$.
This overparametrization was chosen for ease of implementation. One could use efficient minimum distance to gain efficiency. Alternatively, we can consider a sequential approach based on a consistent estimator of $\bm{\beta}$ (see Section \ref{sec:multiple}).

Second, we could repeat the estimation result for different choices of $A_J$, as long as $A_0$, $A_J$, and $B_J$ satisfy the restrictions outlined in the text above. One would typically set $B_J = \{1,\cdots,J\} \setminus \left( A_0 \cup A_J \right)$. As long as the number of choices of $A_J$ does not diverge, the distributional results can be applied directly. 

\section{On adopting multiple normalizations}\label{sec:multiple}

The issue of multiple normalizations deserves further treatment as there are many situations where economic theory is silent on the type of restrictions imposed to the model. In those situations, practitioners face a possibly large number of normalizations that could be used to eliminate the problem of rotational indeterminacy of the factor model. Considering the first partition as the normalization might be arbitrary, as noted in a series of recent papers \citep*{oAttanasio2020,DelBono2020}. We briefly illustrate this issue using the following example:

\begin{example} \label{example2}
In example \ref{example1}, it is clear that $\hat{\theta}_{A_0}$ converges to $\theta_{A_0}=f_1/f_2$, which corresponds to \emph{a} normalization based on reading. The parameter $\theta_{A_0}$ can also be estimated by a normalization based on the factor for writing, $f_3$, and using $y_{i2}$ as an instrument for $y_{i3}$. 
\end{example}

Examples \ref{example1} and \ref{example2} illustrate that it not clear a priori whether to normalize based on mathematics, reading, or writing, leading to important practical questions on how to select a normalization and the corresponding partition. In fact, there are $Q_J$ ways of choosing the subset $A_J$, where
\begin{equation}
Q_J = {J - m_{A_0} \choose m_{A_J}} = \frac{(J - m_{A_0})!}{m_{A_J}! ~ ( J - m_{A_0} - m_{A_J})! } \leq 2^{J - m_{A_0}}. \label{nofn}
\end{equation} 
The solution we pursue in this section is to simultaneously adopt multiple subsets.

Theorem \ref{th:HL_version} establishes conditions under which the GVE that uses one normalization $A_J$ is consistent for the normalized factors and the regression coefficient. In this section, we will assume that $\bm{\beta}$ is known and focus on improved estimation of the factors by using information from multiple normalizations. Define $\bm{R}_i = ( \bm{R}_{iA_0}', \bm{R}_{iA_J}', \bm{R}_{iB_J}')'$ where, for instance, 
$$\bm{R}_{iA_0} = \bm{y}_{iA_0} - \bm{x}_{iA_0}'\bm{\beta} = \bm{f}_{A_0}' \bm{\lambda}_{i} + \bm{u}_{iA_0},$$ and $\bm{f}_{A_0}$ is a matrix of dimension $r \times m_{A_0}$. While the results of Theorem \ref{th:HL_version} hold for general $m_{A_J} \geq r$, we will focus on the special case $m_{A_J} = r$ \citep[e.g.,][among others]{madansky1964instrumental,sPudney81,jHeckman1987,HEATON2012348,bWilliams2020}. 

Letting $q = 1, 2, \hdots, Q_J$, for $Q_J$ as in \eqref{nofn}, and $\bm{M}_{iA_0,(q)} = \bm{I}_{m_{A_0}} \otimes \bm{R}_{iA_J,(q)}'$, we can write
\begin{equation}
\bm{R}_{iA_0} = \bm{M}_{iA_0,(q)} \bm{\theta}_{A_{0},(q)}  + \bm{u}_{iA_0} - \left( \bm{I}_{m_{A_0}} \otimes \bm{u}_{iA_J,(q)} ' \right) \bm{\theta}_{A_{0},(q)} \label{eq:sysres}
\end{equation}
where $\bm{\theta}_{A_0,(q)} = [\bm{I}_{m_{A_0}} \otimes \bm{f}_{A_J,(q)}^{-1}] \vect(\bm{f}_{A_0})$. Relative to the parameter $\bm{\theta}_{A_0}$ estimated by the GVE defined in \eqref{eq:2sls_for_afm}, the parameter $\bm{\theta}_{A_0,(q)}$ in equation \eqref{eq:sysres} can be based on any non-singular transformation. If $Q_J = 1$ and $m_r=1$, then  $\bm{\theta}_{A_0,(q)} = \bm{\theta}_{A_0}$. Moreover, we define
\begin{equation}
\bm{\vartheta}_{A_0} = \sum_{q=1}^{Q_J} W_{q}  \bm{\theta}_{A_0,(q)},  \label{eq:newNST}
\end{equation}
where $W_{q}$ is a matrix of weights. Below, we introduce conditions that cover weighting for models with $J$ fixed (Theorem \ref{thm:MGVE-fixed-T}) and $J$ increasing to infinity (Theorem \ref{T3}). They allow the use of different weighting schemes, including equal weighting.

Consider the following three illustrative examples.

\begin{example}
Consider Example \ref{example0}, where $A_0 = r = 1$ and $A_J = \{2,3\}$. Considering $W_q=1$, we have that $\vartheta_{A_0} = \vartheta_1 = f_1/f_2+f_1/f_3$ and therefore, $\vartheta_1 + 1 = \sum_{q=1}^3 f_q^{-1} f_1$. 
\end{example}

\begin{example}
Consider $A_0 = \{1,2\}$, $r=1$, and $A_J = \{3,4,5\}$. In this case, $Q_J=3$. With equal weights $W_q = 1/3$, we have that
\begin{equation*}
\bm{\vartheta}_{A_0} = \left(\begin{array}{c} \vartheta_1 \\ \vartheta_2 \end{array} \right) = \frac{1}{3} \sum_{q=1}^3 \left[ I_2 \otimes f_{A_J,(q)}^{-1} \right] \left(\begin{array}{c} f_1 \\ f_2 \end{array} \right) = \left(\frac{1}{3f_3} + \frac{1}{3f_4} + \frac{1}{3f_5} \right) \left(\begin{array}{c} f_1 \\ f_2 \end{array} \right).
\end{equation*}
and thus, $\bm{f} = (f_1,f_2)'$ is identified up to a non-singular transformation, which requires that factors in $A_J$ are bounded away from zero.  
\end{example}

\begin{example}
Consider now a two factor model with $J = 6$. In this case, there are $Q_J=6$ different ways of choosing the normalization. For simplicity, assume that $\bm{\vartheta}_{A_0}$ includes $\bm{\vartheta}_5$ and $\bm{\vartheta}_6$. Then,
\begin{equation*}
\bm{\vartheta}_{A_0} = \left(\begin{array}{c} \bm{\vartheta}_5 \\ \bm{\vartheta}_6 \end{array} \right) = \frac{1}{2} \left[ \bm{I}_2 \otimes \left[ \left(\begin{array}{cc} f_{1,1} & f_{1,2} \\ f_{2,1} & f_{2,2} \end{array} \right)^{-1} + \hdots + \left(\begin{array}{cc} f_{3,1} & f_{3,2} \\ f_{4,1} & f_{4,2} \end{array} \right)^{-1} \right] \right] \left(\begin{array}{c} \bm{f}_{5} \\ \bm{f}_{6} \end{array} \right).
\end{equation*}
Again, $\bm{f}_5 = (f_{5,1},f_{5,2})'$ and $\bm{f}_6 = (f_{6,1},f_{6,2})'$ are identified up to a non-singular transformation provided that the matrices corresponding to the first two partitions are non-singular. 
\end{example}

Then, we estimate \eqref{eq:newNST} by the weighted grouped variable estimator (WGVE):   
\begin{equation}
        \widehat{\bm{\vartheta}}_{A_0} = \sum_{q=1}^{Q_J} W_{q} \widehat{\bm{\theta}}_{A_0,(q)}, \label{MGVE}
\end{equation}
where $\widehat{\bm{\theta}}_{A_0,(q)}$ is the GVE defined in \eqref{eq:2sls_for_afm} based on partition $q$. The use of weights for the combination of estimators in a linear fashion is naturally not new \citep[e.g.,][among others]{hP06,linton2016,mHarding2020}. If $r=1$ then $Q_J = J - m_{A_0}$, and one could set $W_q = Q_J^{-1} = (J - m_{A_0})^{-1}$, and define the estimator as $\widehat{\bm{\vartheta}}_{A_0} = (J - m_{A_0})^{-1} \sum_{q=1}^{J - m_{A_0}} \widehat{\bm{\theta}}_{A_0,(q)}$, which is similar in spirit to the common correlated effect estimator of \citet*{hP06}. Moreover, the estimator \eqref{MGVE} is similar to the ones investigated by \citet*{linton2016}. For instance, Example 1 in \citet*{linton2016} consider a similar instrumental variable estimator for a simultaneous equation model, and the optimal choice of weights makes a weighted instrumental variable estimator asymptotically equivalent to the classical 2SLS estimator.

\subsection{Estimation in small J panels}

We begin by considering a panel data model when $J$ is fixed, and therefore, the number of non-singular transformations, $Q_J$, is constant. Note that $A_J$ and $B_J$ are fixed too, so the number of instruments employed in the first stage and the number of normalizations do not increase. This is the case most relevant in the applications using administrative data presented in Section \ref{empirical-application} and in the recent econometric literature \citep[see][for examples]{aJuodis2018,aJuodis2020,NORKUTE2021416}.

The estimator is a trivial extension of the method discussed in the previous section. In the first step, we obtain $\widehat{\bm{\theta}}_{A_{0},(q)}$ for $q = 1, 2, \hdots, Q_J,$ using the estimator \eqref{eq:2sls_for_afm}. In the second step, we compute a consistent estimator of $\bm{\vartheta}_{A_0}$ using a linear combination of consistent estimators obtained in the first step, as shown in \eqref{MGVE}. As expected, a linear combination of a finite number of consistent and asymptotically normal estimators is consistent and asymptotically normal, as shown in Theorem \ref{thm:MGVE-fixed-T} below. 


Let $\bm{v}_{iA_0,(q)} = \bm{y}_{iA_0} - \bm{M}_{iA_0,(q)} \bm{\theta}_{A_{0},(q)}$,  and consider the following definitions:
\begin{align*}
\bm{Q}_{N,(q)} & =\frac{1}{N\times m_{A_0}}\sum_{i=1}^{N}E\left[\bm{Z}_{iA}'\bm{M}_{iA_0,(q)}\right],\\
\bm{W}_{N,(q)} & =\frac{1}{N\times m_{A_0}}\sum_{i=1}^{N}E\left[\bm{Z}_{iA}'\bm{Z}_{iA}\right],\\
\bm{\Omega}_{N,(q)} & =\frac{1}{N\times m_{A_0}}\sum_{i=1}^{N}E\left[\bm{Z}_{iA}' \bm{v}_{iA_0,(q)} \bm{v}_{iA_0,(q)}' \bm{Z}_{iA}\right],\\
\bm{\Phi}_{N,(q)} & = \bm{Q}_{N,(q)} \bm{W}_{N,(q)}^{-1}\bm{Q}_{N,(q)}, \\
\bm{\Sigma}_{N,(q)} & = \bm{\Phi}_{N,(q)}^{-1} \bm{Q}_{N,(q)} \bm{W}_{N,(q)}^{-1}\bm{\Omega}_{N,(q)}\bm{W}_{N,(q)}^{-1}\bm{Q}_{N,(q)} \bm{\Phi}_{N,(q)}^{-1},
\end{align*}
and, by letting $\bm{\Sigma}_{(q)} := \lim_{N \to \infty} \bm{\Sigma}_{N,(q)}$, define 
\begin{equation*}
\mathcal{V}_{A_0} = \sum_{q=1}^{Q_J} \sum_{l=1}^{Q_J} \left[ W_{q} \bm{\Sigma}_{(q)}^{1/2} \right] \left[ W_{l} \bm{\Sigma}_{(l)}^{1/2}  \right]'.
\end{equation*}
The next result considers multiple normalizations and builds on Theorem \ref{th:HL_version}: 

\begin{theorem} \label{thm:MGVE-fixed-T} 

Under conditions (a), (b), and (c) of Theorem \ref{th:HL_version}, if 

\noindent (i) Condition (d) Theorem \ref{th:HL_version} holds for all $1 \leq q \leq Q_J$ such that $\bm{f}_{{A_J,(q)}}$ is a invertible partition matrix of dimension $r \times r$;

\noindent (ii) Condition (e) in Theorem \ref{th:HL_version} holds for all $1 \leq q \leq Q_J$ such that $\bm{Q}_{N,(q)}$ has full rank, $\lambda_{\text{min}}\left(\bm{\Omega}_{N,(q)}\right)\geq\lambda>0$, and $\lambda_{\text{min}}\left(\bm{W}_{N,(q)}\right)\geq K>0$;

\noindent (iii) The weights $\{ W_{q} \}_{q=1}^{Q_J}$ satisfy $\sum_{q = 1}^{Q_J} W_{q} = I$.

\noindent Then, as $N\to\infty$, the estimator $\widehat{\bm{\vartheta}}_{A_0}$ in \eqref{MGVE} is consistent and asymptotically normal with covariance matrix $\mathcal{V}_{A_0}$.
\end{theorem}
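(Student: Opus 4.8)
The plan is to build directly on Theorem \ref{th:HL_version}, treating each normalization $q$ in isolation and only then combining. The first step is to verify that, for every fixed $q \in \{1,\dots,Q_J\}$, the GVE $\widehat{\bm{\theta}}_{A_0,(q)}$ inherits the hypotheses of Theorem \ref{th:HL_version}: conditions (a)--(c) are imposed jointly, condition (i) supplies the per-$q$ invertibility of $\bm{f}_{A_J,(q)}$ demanded by (d), and condition (ii) supplies the per-$q$ rank and eigenvalue bounds demanded by (e). Since $\bm{\beta}$ is treated as known here, Theorem \ref{th:HL_version} then delivers, for each $q$, both consistency $\widehat{\bm{\theta}}_{A_0,(q)} \stackrel{p}{\to} \bm{\theta}_{A_0,(q)}$ and marginal asymptotic normality with variance $\bm{\Sigma}_{(q)} = \lim_N \bm{\Sigma}_{N,(q)}$.

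Consistency of $\widehat{\bm{\vartheta}}_{A_0}$ is then immediate. Because $Q_J$ is finite and constant in the $J$-fixed regime and the weights $W_q$ are fixed matrices, the continuous mapping theorem applied to the finite linear combination \eqref{MGVE} gives $\widehat{\bm{\vartheta}}_{A_0} = \sum_q W_q \widehat{\bm{\theta}}_{A_0,(q)} \stackrel{p}{\to} \sum_q W_q \bm{\theta}_{A_0,(q)} = \bm{\vartheta}_{A_0}$, with condition (iii) ensuring that the weights form a genuine averaging so that the limit is the intended normalization of the factors.

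For the limiting distribution I would \emph{not} treat the marginals separately, since the estimators $\{\widehat{\bm{\theta}}_{A_0,(q)}\}_q$ are computed on the same clusters $i$ and are therefore mutually dependent. The key step is to stack the asymptotically linear representations: each estimator admits the expansion $\sqrt{N m_{A_0}}(\widehat{\bm{\theta}}_{A_0,(q)} - \bm{\theta}_{A_0,(q)}) = \bm{\Phi}_{N,(q)}^{-1}\bm{Q}_{N,(q)}\bm{W}_{N,(q)}^{-1}(N m_{A_0})^{-1/2}\sum_i \bm{Z}_{iA}'\bm{v}_{iA_0,(q)} + o_p(1)$, so the stacked vector is a fixed linear transformation of the stacked scores $(\bm{Z}_{iA}'\bm{v}_{iA_0,(q)})_{q=1}^{Q_J}$. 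These scores are independent across $i$ by condition (a) and obey the moment bounds of (c), so the clustered central limit theorem of \citet{bHansen2019} yields joint asymptotic normality of the stack. Applying the linear map $\sum_q W_q(\cdot)$ together with the Cram\'er--Wold device then shows $\sqrt{N m_{A_0}}(\widehat{\bm{\vartheta}}_{A_0}-\bm{\vartheta}_{A_0})$ is asymptotically normal.

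The main obstacle is identifying the covariance with the stated $\mathcal{V}_{A_0}$. Writing $\bm{C}_q := W_q \bm{\Sigma}_{(q)}^{1/2}$, the stated matrix factors as $\mathcal{V}_{A_0} = (\sum_q \bm{C}_q)(\sum_q \bm{C}_q)'$, which is precisely the covariance that arises when the standardized limits for distinct normalizations coincide, i.e.\ under perfect correlation. The genuine asymptotic covariance of the combination is $\sum_{q,l} W_q \bm{\Gamma}_{(q,l)} W_l'$, where $\bm{\Gamma}_{(q,l)} = \bm{\Phi}_{(q)}^{-1}\bm{Q}_{(q)}\bm{W}^{-1}\bm{\Omega}_{(q,l)}\bm{W}^{-1}\bm{Q}_{(l)}\bm{\Phi}_{(l)}^{-1}$ is built from the cross second moment $\bm{\Omega}_{(q,l)}$ of the two influence functions, and in general this differs from $\bm{\Sigma}_{(q)}^{1/2}(\bm{\Sigma}_{(l)}^{1/2})'$. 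I would therefore either carry $\bm{\Gamma}_{(q,l)}$ explicitly through to the limit and report it, or invoke a matrix Cauchy--Schwarz bound on each cross term to show that $\mathcal{V}_{A_0}$ dominates the true covariance in the Loewner order, so that $\mathcal{V}_{A_0}$ serves as a valid conservative variance for inference. Deciding which of these the statement intends --- and checking that the square roots $\bm{\Sigma}_{(q)}^{1/2}$ are used consistently so that $\mathcal{V}_{A_0}$ is well defined --- is where the real care lies; everything else follows routinely from Theorem \ref{th:HL_version} and the finiteness of $Q_J$.
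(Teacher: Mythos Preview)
Your consistency argument is essentially the paper's: both write $\widehat{\bm{\vartheta}}_{A_0}-\bm{\vartheta}_{A_0}=\sum_{q} W_q(\widehat{\bm{\theta}}_{A_0,(q)}-\bm{\theta}_{A_0,(q)})$, invoke Theorem~\ref{th:HL_version} for each $q$ under (i)--(ii), and use finiteness of $Q_J$ together with boundedness of the weights.

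For asymptotic normality you take a genuinely different, and more careful, route. The paper does not stack influence functions or appeal to a joint CLT. It simply writes $\sqrt{N}(\widehat{\bm{\vartheta}}_{A_0}-\bm{\vartheta}_{A_0})=\sum_{q} W_q\,\bm{\Sigma}_{N,(q)}^{1/2}\hat{\bm{\xi}}_{N,(q)}$ with $\hat{\bm{\xi}}_{N,(q)}:=\bm{\Sigma}_{N,(q)}^{-1/2}\sqrt{N}(\widehat{\bm{\theta}}_{A_0,(q)}-\bm{\theta}_{A_0,(q)})$ asymptotically standard normal by Theorem~\ref{th:HL_version}, and then asserts the conclusion because the right-hand side is ``a weighted sum of a finite number of asymptotically normal random variables'' with bounded weights. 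Your stacking of the scores $(\bm{Z}_{iA}'\bm{v}_{iA_0,(q)})_q$ and application of the \citet{bHansen2019} CLT to the stack is the rigorous version of that sentence, since marginal normality of each $\hat{\bm{\xi}}_{N,(q)}$ does not by itself yield joint normality; the paper's argument leaves that step implicit.

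Your worry about the form of $\mathcal{V}_{A_0}$ is well placed and is not resolved by the paper's proof either. The proof does not derive the displayed square-root form $\sum_{q,l}[W_q\bm{\Sigma}_{(q)}^{1/2}][W_l\bm{\Sigma}_{(l)}^{1/2}]'$; and in the optimal-weights discussion immediately following the theorem the paper silently rewrites the covariance as $\mathcal{V}_{A_0}=\sum_{q,l} W_q\,\bm{\Sigma}_{(lq)}\,W_l'$ with genuine cross blocks $\bm{\Sigma}_{(lq)}$, which is exactly your $\sum_{q,l} W_q\,\bm{\Gamma}_{(q,l)}\,W_l'$. So the object the paper actually uses downstream is the one you propose to carry, and the square-root display should be read as shorthand (or as a notational slip) rather than as a claim of perfect correlation. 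Your plan to report the cross-covariance form is the right one; the Cauchy--Schwarz route to a conservative bound is unnecessary here.
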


Assumptions (i) and (ii) are generalizations of Assumptions (d) and (e) in Theorem \ref{th:HL_version}. Condition (i) imposes restrictions to generate suitable non-singular transformations across all partitions, and condition (ii) guarantees a well-defined asymptotic distribution across feasible non-singular transformations. Lastly, condition (iii) allows the use of different weighting schemes to improve the performance of the GVE and it is similar to the ones employed in the literature such as \citet*{hP06} and \citet*{linton2016}. We do not consider random weights, but condition (iv) can be easily accommodated to incorporate random matrices as in \citet*{linton2016}.

Optimal weights can be found as minimizers of the asymptotic covariance matrix of the estimator. To see this, write $\mathcal{V}_{A_0}$ as  
\begin{equation*}
\mathcal{V}_{A_0} = \sum_{q=1}^{Q_J} \sum_{l=1}^{Q_J} W_{q} \bm{\Sigma}_{(lq)} W_{l}'.
\end{equation*}
Let $\bm{\Sigma} = [\bm{\Sigma}_{(lq)}]$ and $\iota_{Q_J}$ be a $Q_J$-dimensional vector of ones. Thus,
\begin{equation*}
W_{0q}^\ast = \left[ (\iota_{Q_J} \otimes I)' \bm{\Sigma}^{-1}  (\iota_{Q_J} \otimes I)\right]^{-1} \left[ (\iota_{Q_J} \otimes I)' \bm{\Sigma}^{-1} \right]_q. 
\end{equation*}
It follows that the estimator $\widehat{\bm{\vartheta}}_{A_0}^\ast = \sum_{q=1}^{Q_J} W_{0q}^\ast \widehat{\bm{\theta}}_{A_0,(q)}$ has asymptotic covariance matrix,
\begin{equation*}
\mathcal{V}_{A_0}^\ast = \left[ (\iota_{Q_J} \otimes I)' \bm{\Sigma}^{-1}  (\iota_{Q_J} \otimes I)\right]^{-1}.
\end{equation*}
In other words, the optimal weighting is proportional to the inverse of the asymptotic covariance matrix of the estimator. The estimation $\mathcal{V}_{A_0}^\ast$ is straightforward and follows the estimation of $\bm{\Sigma}$. See Section 6 in \citet*{linton2016} for specific details. 

\subsection{Estimation in large J panels} \label{sec:largeJ}

In the case of panel data models with large $J$, possibly larger than $N$, there are known issues with the approach above. As discussed before, there is a large number of possible non-singular transformations. Moreover, least squares estimation of the regression of endogenous variables on the instruments has poor finite sample properties, and therefore, the GVE estimator is expected to perform poorly in practice. The procedure could suffer from a finite sample bias problem similar to the one investigated in \citet*{Jerry2008} and \citet*{Chao2012}. Therefore, this section investigates the case of large $J$ considering developments in \citet*{Belloni2012} and \citet*{linton2016}, although the large $J$ situation is not common in the analysis of student administrative data (Section \ref{empirical-application}). The procedure in  \citet{Belloni2012} requires to approximate a large dimensional model by a low-dimensional sub-model. If some instruments are invalid, the procedure can be easily adapted to include the median estimator proposed by \citet*{fWind2019}.

We propose to estimate $\bm{\vartheta}_j$ for all $j \in A_0$ in two main steps, as before. We begin by describing the first step involving the use of instrumental variables. Let $\bm{R}_j = (R_{1j},R_{2j},\hdots,R_{Nj})'$ be an $N$-dimensional vector of dependent variables, $\bm{R}_{A_J,(q)}$ be an $N \times r$ matrix of endogeneous variables, and $\bm{R}_{B_J,(q)}$ be an $N $ by $m_{B_J} = J-m_{A_0}-r$ matrix of internal instruments. Let $\widehat{L}_{il,(q)} := \bm{R}_{i,B_J,(q)}' \widehat{\bm{\pi}}_{l,(q)}$ for $l = 1, 2, \hdots, r$, where $\widehat{\bm{\pi}}_{l,(q)}$ is a Lasso estimator defined as a solution of the following problem:
\begin{equation*}
\widehat{\bm{\pi}}_{l,(q)} = \argmin_{\bm{\pi}_{l,(q)} \in \bm{\Pi}_{(q)}} \sum_{i=1}^N ( R_{il,A_J,(q)} - \bm{R}_{i,B_J,(q)}' \bm{\pi}_{l,(q)} )^{2} + \frac{\lambda_l}{N} \|  \Upsilon_l  \bm{\pi}_{l,(q)} \|_1,  \label{method1}
\end{equation*}
where the parameter set $\bm{\Pi}_{(q)} \subseteq \RR^{(J-m_{A_0}-r)}$ and $\| \bm{b} \|_1$ is the standard $\ell_1$-norm defined as $\| \bm{b} \|_1 = \sum_i | b_i |$ for a generic constant $b_i$. The penalty loadings $\Upsilon_l$ and $\lambda_l$ are selected as in \citet*{Belloni2012}. We then collect the predictions $\hat{L}_{il,(q)}$ for all $1 \leq i \leq N$ and $1 \leq l \leq r$ to obtain a matrix $\widehat{\bm{L}}_{(q)}$ of dimension $N \times r$. In a second stage of the IV approach, we find $\widehat{\bm{\theta}}_{j,(q)}$ as the solution of the following equation:
\begin{equation}
G_{N}(\bm{\theta}_{j}) = \widehat{\bm{L}}_{(q)}' ( \bm{R}_{j} - \bm{R}_{A_J,(q)} \bm{\theta}_{j,(q)} ) = 0,
 \label{GVE3}
\end{equation}  
with $G_{(q)}(\bm{\theta}_{j}) = E(G_{N,(q)}(\bm{\theta}_{j}))$. 

The last step includes the following estimator: 
\begin{equation}
        \widehat{\bm{\vartheta}}_j = \sum_{q=1}^{Q_J^\ast} W_{q} \widehat{\bm{\theta}}_{j,(q)},  \label{MGVE-largeT}
\end{equation}
where the truncation parameter $Q_J^\ast < Q_J$ for all $J$, and $\widehat{\bm{\theta}}_{j,(q)}$ is a Lasso-type estimator obtained as the solution of \eqref{GVE3}. 

Before establishing large sample results when the number of normalizations tend to infinity as $N$ and $J$ tend to infinity, we emphasize two conditions that are standard in the literature. The linear model estimated in the first stage of the IV procedure uses Condition AS in \citet*{Belloni2012} to approximate a conditional expectation up to a small non-zero approximation error. Let $L_{il,(q)} := \bm{R}_{i,B_J,(q)}'  \bm{\pi}_{0l,(q)} + a_{il,(q)}$, $\max_{1 \leq l \leq r}  \| \bm{\pi}_{0l,(q)} \|_0 \leq s_{(q)} = o(N)$, and 
\begin{equation} 
\max_{1 \leq l \leq r}  \left[ \frac{1}{N} \sum_{i=1}^N E ( a_{il,(q)}^2 )  \right]^{1/2} \leq c_{s} = O_p\left( \sqrt{\frac{s_{(q)}}{N}} \right),  \label{AS}
\end{equation} 
for $l = 1, 2, \hdots r$. Condition AS in \citet*{Belloni2012}, reproduced in the previous equations in the context of a factor model, states that at most $s_{(q)}$ variables are needed to approximate well the conditional expectation $L_{il,(q)}$ with a small approximation error. This error is of the same order of magnitude than the estimator error, $\sqrt{s_{(q)}/N}$. Noting that this condition holds for a given $j,q$ pair, the assumption allows us to identify 
\begin{equation*}
\mathfrak{J}_{l,(q)} := \mbox{support}(\bm{\pi}_{0l,(q)}) = \left\{k \in \{1,2,\hdots, J - m_{A_0} - r \}: | \pi_{0lk,(q)} | > 0 \right\}.
\end{equation*}

Moreover, \citet*{bickel2009} and \citet*{Belloni2012} introduce conditions on the Gram matrix $\bm{M} = E ( \bm{R}_{iB_{J},(q)} \bm{R}_{iB_{J},(q)}' )$, which is not positive definite if $m_{B_J}  > N$. They propose a notion of ``restricted" positive definiteness for vectors in a restricted set. In our model, this set is defined as $\Delta_\mu = \{ \bm{\Psi} \in \RR^{J - m_{A_0} - r} : \| \bm{\Psi}_{\mathfrak{J}_{l,(q)}^c} \|_1 \leq \mu \| \bm{\Psi}_{\mathfrak{J}_{l,(q)}} \|_1, \bm{\Psi} \neq 0 \}$. We can then define
\begin{equation*}
\kappa_\mu^2( \bm{M} ) := \min_{\Psi \in \Delta, |\mathfrak{J}_{l,(q)}| \leq s_{(q)}}  s_{(q)} \frac{ \bm{\Psi}' \bm{M} \bm{\Psi} } { \| \bm{\Psi}_{\mathfrak{J}_{l,(q)}} \|_1^2 }.
\end{equation*}

We now establish the consistency and asymptotic normality of the estimator introduced in \eqref{MGVE-largeT}.  

\begin{theorem}\label{T3}
Consider: 

\noindent (a) Let $\mathcal{Q}_J = \{1,2,\hdots,Q_J\}$ and $\mathcal{Q}_J^\ast = \{1,2,\hdots,Q_J^\ast\}$ with $Q_J^\ast < Q_J$ for all $J$. The sequence of weights $\{ W_q \}$ satisfy condition (iii) in Theorem \ref{thm:MGVE-fixed-T} and, as $Q_J \to \infty$, with 
\begin{equation*}
\sup_{J \geq 1} \sum_{q=Q_J^\ast+1}^{Q_J} \| W_{q} \| \to 0,
\end{equation*}
where, for any integer $C \geq 1$,
\begin{equation*}
Q_J^\ast =  \frac{C r!}{(J - m_{A_0})^{r}} Q_J;
\end{equation*}

\noindent (b) For any $\mu>0$, a constant $k$ exists such that $\kappa_\mu(\bm{M}) \geq k > 0$ with probability tending to one as $N \to \infty$;

\noindent (c) Let $\tilde{R}_{il,(q)} = R_{il,(q)} - N^{-1} \sum_{i=1}^N E( R_{il,(q)} )$ and $e_{il,(q)} := R_{il,A_J,(q)} - \bm{R}_{i,B_J,(q)}' \bm{\pi}_{l,(q)}$. The variables $\tilde{R}_{il,(q)}$ and $e_{il,(q)}$ have uniformly bounded conditional moments of order 4 and are i.i.d. across $i$. The vector $\bm{R}_{i,B_J,(q)}$ is bounded and i.i.d. across $i$. Moreover, 
\begin{equation*}
\frac{K_N^2 (\log(m_{B_J} \vee N))^3}{N} \to 0, \; \mbox{and,} \; \; \frac{s_{(q)} \log(m_{B_J} \vee N) }{ N} \to 0;
\end{equation*}

\noindent (d) There is an $\epsilon_N(\eta) > 0$ for $\eta > 0$ with $\epsilon_N(\eta) \to 0$ when $N \to \infty$ such that
\begin{equation*}
\min_{q \in \mathcal{Q}_J^\ast} \inf_{ \| \bm{\theta} - \bm{\theta}_{0} \| > \eta} \| G_{(q)}(\bm{\theta}_j) \| \geq \epsilon_N(\eta) > 0;
\end{equation*}

\noindent and

\noindent (e) For $\epsilon_N(\eta)$, $Q_J^\ast$ and $Q_J$, with $Q_J^\ast \to \infty$, $Q_J \to \infty$ and $Q_J^\ast < Q_J$ as $J \to \infty$, there is a positive sequence $\alpha_{N} = o(1)$ with $\sup_N (\alpha_{N} / \epsilon_N(\eta)) < \infty$ such that
\begin{equation*}
\max_{q \in \mathcal{Q}_J^\ast} \sup_{ \bm{\theta}_j \in \bm{\Theta}_j}  \| G_{N,(q)}(\bm{\theta}_{j}) - G_{(q)}(\bm{\theta}_j) \|  = o_p(\alpha_{N}).
\end{equation*}

\noindent Under conditions (a)-(e) and the conditions of Theorem \ref{th:HL_version}, as both $N$ and $J \to \infty$, the estimator $\widehat{\bm{\vartheta}}_{j}$ defined in \eqref{MGVE-largeT} is consistent and asymptotically normal with covariance matrix $\bm{\Omega}_j$.
\end{theorem}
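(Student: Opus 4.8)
The plan is to treat $\widehat{\bm{\vartheta}}_j$ in \eqref{MGVE-largeT} as a truncated, weighted average of the per-normalization estimators $\widehat{\bm{\theta}}_{j,(q)}$, and to obtain its limit distribution in three layers: (i) a rate for the first-stage Lasso instruments $\widehat{\bm{L}}_{(q)}$, holding \emph{uniformly} over the growing index set $\mathcal{Q}_J^\ast$; (ii) consistency and an asymptotic-linear (influence-function) representation for each second-stage estimator $\widehat{\bm{\theta}}_{j,(q)}$ solving \eqref{GVE3}; and (iii) a triangular-array central limit theorem for the weighted sum. Throughout I would use cross-sectional independence across $i$ from condition (a) of Theorem \ref{th:HL_version} and the fourth-moment bounds in condition (c), carrying the fixed-$q$ conclusions of Theorem \ref{th:HL_version} and Theorem \ref{thm:MGVE-fixed-T} as building blocks.

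First I would handle the Lasso step. For a single pair $(l,q)$, the sparse-approximation bound \eqref{AS}, the restricted-eigenvalue condition (b), and the rate and moment conditions (c) place us exactly in the setting of \citet*{Belloni2012}, whose prediction-error bound gives $N^{-1}\sum_i (\widehat{L}_{il,(q)} - L_{il,(q)})^2 = O_p( s_{(q)} \log(m_{B_J}\vee N)/N )$. Because the final estimator averages over the diverging set $\mathcal{Q}_J^\ast$, the essential point is to make this hold \emph{uniformly} in $q$: I would take a union bound over the $Q_J^\ast$ normalizations, which costs a $\log Q_J^\ast$ factor, and check that $K_N^2 (\log(m_{B_J}\vee N))^3/N \to 0$ and $s_{(q)}\log(m_{B_J}\vee N)/N \to 0$ from (c) dominate it given the explicit $Q_J^\ast = C r!\,(J-m_{A_0})^{-r} Q_J$ in (a). The conclusion is $\max_{q\in\mathcal{Q}_J^\ast}\|\widehat{\bm{L}}_{(q)} - \bm{L}_{(q)}\| \to_p 0$ in prediction norm.

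Next I would establish, uniformly in $q$, consistency of $\widehat{\bm{\theta}}_{j,(q)}$. Viewing it as a $Z$-estimator of $G_{N,(q)}(\bm{\theta})=0$, condition (e) supplies $\max_q \sup_{\bm{\theta}} \|G_{N,(q)}(\bm{\theta}) - G_{(q)}(\bm{\theta})\| = o_p(\alpha_N)$ while condition (d) supplies the population separation $\min_q \inf_{\|\bm{\theta}-\bm{\theta}_0\|>\eta}\|G_{(q)}(\bm{\theta})\| \geq \epsilon_N(\eta)$; the compatibility $\sup_N(\alpha_N/\epsilon_N(\eta))<\infty$ then forces $\max_q \|\widehat{\bm{\theta}}_{j,(q)} - \bm{\theta}_{0,(q)}\| \to_p 0$ by the usual contradiction argument. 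I would then linearize: expanding the moment around $\bm{\theta}_{0,(q)}$ and inverting the limiting Jacobian $\bm{H}_{(q)}$ of $\widehat{\bm{L}}_{(q)}'\bm{R}_{A_J,(q)}$ (non-singular by the relevance part of (e) together with step (i)) gives $\sqrt{N}(\widehat{\bm{\theta}}_{j,(q)}-\bm{\theta}_{0,(q)}) = -\bm{H}_{(q)}^{-1}\sqrt{N}\,G_{N,(q)}(\bm{\theta}_{0,(q)}) + o_p(1)$. Here the approximate Neyman orthogonality of the Lasso-based moment — the instrument enters multiplicatively and the moment is evaluated at the true $\bm{\theta}_{0,(q)}$ — ensures the first-stage error from step (i) contributes only $o_p(1)$, so the leading term is a sample average of mean-zero i.i.d. influence functions.

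Finally I would assemble the weighted average. Writing $\sqrt{N}(\widehat{\bm{\vartheta}}_j-\bm{\vartheta}_j) = \sum_{q=1}^{Q_J^\ast} W_q\sqrt{N}(\widehat{\bm{\theta}}_{j,(q)}-\bm{\theta}_{0,(q)}) + \mathrm{bias}_J$, the truncation term $\mathrm{bias}_J$ from dropping $q>Q_J^\ast$ is controlled by $\sup_J\sum_{q=Q_J^\ast+1}^{Q_J}\|W_q\|\to 0$ in (a); substituting the representation from the previous step expresses the remainder as a weighted sum, over a growing $q$-set, of i.i.d.-across-$i$ vectors, and a Lindeberg/triangular-array CLT yields asymptotic normality with covariance $\bm{\Omega}_j$ accumulating the across-normalization covariances just as $\mathcal{V}_{A_0}$ accumulates the $\bm{\Sigma}_{(lq)}$ in Theorem \ref{thm:MGVE-fixed-T}. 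The main obstacle will be the joint treatment over the diverging index set $\mathcal{Q}_J^\ast$: the normalizations are overlapping partitions of a common pool of measurements, so the $\widehat{\bm{\theta}}_{j,(q)}$ are strongly dependent across $q$, and both the union bound in step (i) and the CLT must track this dependence while $m_{B_J}$ itself diverges. Verifying that the explicit truncation $Q_J^\ast = C r!\,(J-m_{A_0})^{-r}Q_J$ simultaneously kills the truncation bias, keeps the accumulated variance finite, and leaves room for the $\log Q_J^\ast$ union-bound cost is where I expect the real work to lie.
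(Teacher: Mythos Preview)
Your consistency argument via conditions (d)--(e), splitting off the tail $q>Q_J^\ast$ by assumption (a) and handling the leading block by a uniform $Z$-estimator contradiction, matches the paper's proof closely (which in turn follows \citet{linton2016}, Lemma~1). The per-$q$ appeal to \citet{Belloni2012} for the Lasso first stage is also what the paper does.

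Where you diverge is in the asymptotic-normality step, and the divergence stems from missing the paper's central observation about $Q_J^\ast$. You treat the index set $\mathcal{Q}_J^\ast$ as genuinely diverging and therefore plan a union bound costing $\log Q_J^\ast$, a triangular-array CLT, and careful tracking of cross-$q$ dependence; you flag this as ``where the real work lies.'' The paper, by contrast, notes that with $m_{A_J}=r$ fixed, $Q_J=\binom{J-m_{A_0}}{r}\sim (J-m_{A_0})^r/r!$, so the prescribed truncation $Q_J^\ast = C\,r!\,(J-m_{A_0})^{-r}Q_J$ satisfies $Q_J^\ast \to C$, a \emph{constant}. Consequently the CLT step collapses to a finite weighted sum of $C$ asymptotically normal estimators, exactly as in Theorem~\ref{thm:MGVE-fixed-T}, and no triangular-array argument, no union bound over a growing set, and no control of accumulated variance is required. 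This is not a fatal gap in your plan---your route could in principle be pushed through---but it is substantially harder than necessary, and the obstacle you single out dissolves once you compute the limit of $Q_J^\ast$.
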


Condition (a) extends the condition used in Theorem \ref{thm:MGVE-fixed-T} to allow the use of weights when there is available a growing number of non-singular transformations, and it is similar to Condition A1 in \citet*{linton2016}. The implication is that truncation parameter $Q_J^\ast$ needs to grow slowly to satisfy the conditions in Lemma 1 in \citet*{linton2016} and are satisfied if $Q_J^\ast$ grows at logarithm rates. The truncation parameter defined in (a) satisfy the condition. Under assumption (b), we can determine the rate of convergence of the Lasso-type estimator in the case of Gaussian models with homocedastic errors. Assumption (c) is needed for the estimation of conditional expectation functions under non Gaussian conditions and heteroskedastic errors, and it is similar to Condition RF as implied by Lemma 3.b in \citet*{Belloni2012}. Conditions (b) and (c), in addition to the conditions on sparsity, are crucial for the consistency of the IV estimator. Assumption (d) is a modified version of a standard condition for uniform convergence of estimators that minimize a criterion function \citep[Theorem 5.9,][]{vanderVaart98}. The difference is that the condition is imposed on every normalization. Assumption (e) is similar to condition (A.4) in \citet*{linton2016}. These assumptions impose uniformity over $Q_J^\ast$ normalizations. See Lemma 1 in \citet*{linton2016}.

The result in Theorem \ref{T3} is achieved by using a ``standardized'' binomial coefficient as a truncation parameter, which controls the rate of growth of $Q_J^\ast$ as $J \to \infty$. As long as $r$ is fixed, one can approximate $Q_J$ by $(J-m_{A_J})^r/r!$, and thus, the ratio of $Q_J^\ast \to C$ as $J \to \infty$. This result is important in practice as it indicates that the truncation parameter $Q_J^\ast$ should be determined to minimize computational time as well as to maximize efficiency gains. There are several options for practitioners. One is to set $C=1$ and then potentially investigate the marginal impact of an additional normalization in terms of the standard error of the estimator. 
 
\section{Simulation experiments}

In this section, we conduct an investigation of the performance of the proposed approaches in comparison to existing methods. Using a series of simulation experiments, we report the root mean squared errors of new and existing estimators for different models. We first consider a factor model, and then a factor-augmented model.

We begin by considering a one-factor model similar to the one used in \citet{HEATON2012348}. Observations are generated from $y_{ij} = \lambda_{1,i} f_{1,j} + u_{ij}$, where the error term $u_{ij} \sim F_u$, and $\lambda_{1,i}$ is drawn as an independent observation from a uniform distribution ranging from 0.5 to 3.5. We generate observations for the factor following the equation $f_{1,j} = 0.8 f_{1,j-1} + \eta_{j}$ for $j = -S + 1, -S + 2, \hdots, 0, 1, \hdots, J$, where $\eta_{j}$ is an i.i.d. random variable distributed as uniform with support ranging from 0 to 1. We set $S = 50$ to minimize the effects of the initial value, $f_{1,-49}=1$. We consider two variations of the model. We first assume that the error term $u_{ij}$ is an i.i.d. Gaussian random variable, and then we assume that $u_{ij}$ is a random variable distributed as $t$-student with 3 degrees of freedom ($t_3$).

Table \ref{table0} presents the root mean squared error (RMSE) for the parameter $\bm{\theta}$. The table shows results from different estimators. We compare our estimators to more traditional approaches such as PCA \citep{BAI2013}, IV an instrumental variables estimator that uses internal instrumental variables, and LAS an estimator that uses the LASSO instead of the IV estimator. The implementation of the LASSO estimator utilizes the R package hdm as described by \cite{chernozhukov2016high}. For these estimators, we consider the root mean squared deviations of the estimated normalized factors, $(J^{-1} \sum_{j=1}^J (\hat{\theta}_{j} - \theta_{j})^2)^{1/2}$, where $\theta_j = f_{1,j}/f_{1,1}$. The table also shows the RMSE of the new estimators. GVE denotes the grouped variable estimator as in \eqref{eq:2sls_for_afm} using averages of $J/2$ instruments. For the GVE, we define $\theta_j = f_{1,j}/\bar{f}_{1}$, where $\bar{f}_{1}$ is constructed as an average of $J/2 - 1$ factors. Finally, WGVE refers to the weighted estimator that uses all partitions and a LASSO procedure in the first step as in \eqref{MGVE-largeT}. The RMSE is defined as $(J^{-1} \sum_{j=1}^J ( \hat{\vartheta}_{j} - \vartheta_{j} )^2)^{1/2}$, considering $Q_J^\ast = C = Q_J$. The table shows results for different sample sizes of $N = \{50,100\}$ and $J = \{10,20\}$. 

\begin{table}
\begin{center}
\begin{small}
\begin{tabular}{ c c c c c c c c c c c c} \hline
\multicolumn{2}{c}{}&\multicolumn{5}{c}{Model with Gaussian Errors} &\multicolumn{5}{c}{Model with $t_3$ Errors} \\ 
\multicolumn{1}{c}{N}&\multicolumn{1}{c}{J}&\multicolumn{1}{c}{PCA}&\multicolumn{1}{c}{IV}&\multicolumn{1}{c}{LAS}&\multicolumn{1}{c}{GVE}&\multicolumn{1}{c}{WGVE} &\multicolumn{1}{c}{PCA}&\multicolumn{1}{c}{IV}&\multicolumn{1}{c}{LAS}&\multicolumn{1}{c}{GVE}&\multicolumn{1}{c}{WGVE}  \\  \hline
50      &       10      &       0.034   &       0.034   &       0.034   & 0.030 & 0.026   &       0.060   &       0.059   &       0.058   &       0.053 & 0.043 \\
50      &       20      &       0.040   &       0.040   &       0.039   &       0.028 & 0.027 &       0.068   &       0.073   &       0.066   &       0.049 & 0.046   \\
100     &       10      &       0.026   &       0.026   &       0.026   &       0.022 & 0.018 &       0.046   &       0.042   &       0.042   &       0.044 & 0.031   \\
100     &       20      &       0.026   &       0.026   &       0.026   &       0.020 & 0.019 &       0.047   &       0.047   &       0.044   &       0.036 & 0.033   \\   \hline
\end{tabular}
\vspace{3mm}
\caption{\emph{Root mean squared error of estimators in a one-factor model. PCA refers to principal components analysis, IV denotes the instrumental variable estimator, LAS denotes the LASSO estimator, GVE is the grouped variable estimator, and WGVE is the weighted grouped variable estimator.} \label{table0}}
\end{small}
\end{center}
\end{table}

As can be seen from Table \ref{table0}, the PCA estimator offers excellent performance among existing methods. The IV estimator only performs marginally better than PCA in models with $t_3$ errors, although the difference in performance disappears when $N=100$ and $J=20$. Furthermore, it is interesting to see, although expected, that LASSO outperforms IV when $J$ is relatively large with respect to $N$. The results reflect the well known issues with IV estimation in factor models, while simultaneously demonstrating the advantages of employing the LASSO regression approach for high-dimensional models. In contrast the performance of the proposed GVEs is excellent and, in general, they offer the smallest RMSE across all variants of the model. 

We also investigate the relative performance of the estimator in a  factor-augmented panel data model. Following closely \citet{hP06}, we generate observations based on the following model for $i = 1, 2, 3, \hdots, N$ and $j = -S + 1, \hdots, 0, 1, \hdots, J$:
\begin{eqnarray}
y_{ij} & = & \beta_{0} + \beta_1 x_{1,ij} + \beta_{2} x_{2,ij} + \lambda_{1,i} f_{1,j} + u_{ij}, \label{mc:eqy} \\
x_{s,ij} & = &  a_j \lambda_{1,i} + b_j  f_{1,j} + c_j \lambda_{1,i} f_{1,j} + v_{s,ij},  \label{mc:eqx} \\
f_{1,j} & = &  \rho f_{1,j-1} + \eta_{j}. 
\end{eqnarray}
As in the case of the previous one factor model, the error term in equation \eqref{mc:eqy} is assumed to be distributed as either Gaussian or $t_3$. The error term in equation \eqref{mc:eqx} is $\bm{v}_{ij} = (v_{1,ij},v_{2,ij})' \sim \mathcal{N}(0,\bm{I})$. Moreover, we set the parameters of the model to generate an endogenous variable, $x_1$, and an exogenous variable, $x_2$. The parameters in equations \eqref{mc:eqy} and \eqref{mc:eqx} are $\beta_1=\beta_2=a_1=1$, $b_1=2$, $c_1=0.5$, $\beta_0=a_2=b_2=c_2=0$, and $\rho=0.8$. Lastly, as before, we set $S = 50$ to minimize the effects of the initial values on the outcome, $f_{1,-49}=1$, and $\eta_j$ is an i.i.d. random variable distributed as uniform $\mathcal{U}[0,1]$. 

\begin{table}
\begin{center}
\begin{small}
\begin{tabular}{ c c c c c c c c c c c c} \hline
\multicolumn{2}{c}{} &\multicolumn{5}{c}{Model with Gaussian Errors} &\multicolumn{5}{c}{Model with $t_3$ Errors} \\ 
\multicolumn{1}{c}{N}&\multicolumn{1}{c}{J}&\multicolumn{1}{c}{PCA}&\multicolumn{1}{c}{IV}&\multicolumn{1}{c}{LAS}&\multicolumn{1}{c}{GVE}&\multicolumn{1}{c}{WGVE}&\multicolumn{1}{c}{PCA}&\multicolumn{1}{c}{IV}&\multicolumn{1}{c}{LAS}&\multicolumn{1}{c}{GVE}&\multicolumn{1}{c}{WGVE} \\ \hline
 \multicolumn{2}{c}{}& \multicolumn{10}{c}{First Stage Method: IEE; Design 1: $\lambda_{1,i} \sim \mathcal{U}[0.5,3.5]$.} \\  \hline
50      &       10      &       0.050   &       0.049   &       0.050   &       0.040 & 0.036 &       1.417   &       0.106   &       0.366   & 0.110 &       0.157   \\
50      &       20      &       0.043   &       0.044   &       0.043   &       0.033 & 0.033 &       0.366   &       0.100   &       0.141   & 0.074 &       0.223   \\
100     &       10      &       0.030   &       0.030   &       0.030   &       0.027 & 0.027 &       0.328   &       0.076   &       0.093   & 0.065 &       0.067   \\
100     &       20      &       0.032   &       0.031   &       0.032   &       0.023 & 0.023 &       0.638   &       0.070   &       0.086   & 0.050 &       0.066   \\ \hline
 \multicolumn{2}{c}{}& \multicolumn{10}{c}{First Stage Method: CCE; Design 1: $\lambda_{1,i} \sim \mathcal{U}[0.5,3.5]$.} \\  \hline
50      &       10      &       0.045   &       0.044   &       0.044   &       0.036 & 0.034 &       0.075   &       0.071   &       0.072   & 0.063 &       0.058   \\
50      &       20      &       0.038   &       0.041   &       0.039   &       0.029 & 0.030 &       0.067   &       0.071   &       0.066   & 0.050 &       0.049   \\
100     &       10      &       0.033   &       0.033   &       0.033   &       0.027 & 0.027 &       0.060   &       0.054   &       0.056   & 0.045 &       0.041   \\
100     &       20      &       0.028   &       0.028   &       0.028   &       0.021 & 0.022 &       0.046   &       0.048   &       0.045   & 0.035 &       0.036   \\  \hline
 \multicolumn{2}{c}{}& \multicolumn{10}{c}{First Stage Method: IEE; Design 2: $\lambda_{1,i} \sim \mathcal{N}(0,1)$.} \\    \hline
50      &       10      &       0.078   &       0.080   &       0.080   & 0.067   & 0.062 &       0.453   &       0.150   &       0.211 & 0.138 & 0.433   \\
50      &       20      &       0.083   &       0.096   &       0.084   &       0.063 & 0.066 &       2.745   &       0.183   &       0.148   & 0.119 &       0.125   \\
100     &       10      &       0.058   &       0.058   &       0.058   & 0.048   & 0.044 &       0.204   &       0.102   &       0.109   & 0.088 &       0.082   \\
100     &       20      &       0.060   &       0.064   &       0.061   & 0.044   & 0.047 &       0.434   &       0.141   &       0.122   & 0.093 &       0.107   \\ \hline
 \multicolumn{2}{c}{}& \multicolumn{10}{c}{First Stage Method: CCE; Design 2: $\lambda_{1,i} \sim \mathcal{N}(0,1)$.} \\    \hline
50      &       10      &       0.075   &       0.078   &       0.077   &       0.065 & 0.062 &       0.159   &       0.132   &       0.126 & 0.110 & 0.114   \\
50      &       20      &       0.082   &       0.096   &       0.083   &       0.062 & 0.066 &       0.151   &       0.182   &       0.141   & 0.108 &       0.117   \\
100     &       10      &       0.057   &       0.057   &       0.057   & 0.047   & 0.044 &       0.103   &       0.098   &       0.096   & 0.080 &       0.076   \\
100     &       20      &       0.060   &       0.064   &       0.061   & 0.044   & 0.047 &       0.121   &       0.127   &       0.107   & 0.075 &       0.083   \\ \hline
 \multicolumn{2}{c}{} &  \multicolumn{10}{c}{First Stage Method: IEE; Design 3: $\lambda_{1,i}$ distributed as $\mathcal{N}$ or $\mathcal{U}$.} \\    \hline
50      &       10      &       0.084   &       0.086   &       0.086   & 0.072 & 0.066   &       0.767   &       0.167   &       0.204   & 0.217 &       0.148   \\
50      &       20      &       0.091   &       0.105   &       0.090   & 0.067 & 0.071   &       0.392   &       0.205   &       0.177   & 0.120 &       0.162   \\
100     &       10      &       0.059   &       0.060   &       0.060   & 0.049 & 0.047   &       0.233   &       0.107   &       0.106   & 0.092 &       0.083   \\
100     &       20      &       0.061   &       0.066   &       0.062   & 0.046 & 0.050   &       0.524   &       0.133   &       0.133   & 0.086 &       0.102   \\ \hline
 \multicolumn{2}{c}{} &  \multicolumn{10}{c}{First Stage Method: CCE; Design 3: $\lambda_{1,i}$ distributed as $\mathcal{N}$ or $\mathcal{U}$.} \\    \hline
50      &       10      &       0.083   &       0.085   &       0.085   & 0.069 & 0.065   &       0.266   &       0.153   &       0.231   & 0.123 &       0.128   \\
50      &       20      &       0.089   &       0.104   &       0.090   & 0.067 & 0.071   &       0.191   &       0.203   &       0.149   & 0.110 &       0.149   \\
100     &       10      &       0.058   &       0.058   &       0.058   & 0.048 & 0.046   &       0.115   &       0.101   &       0.099   & 0.085 &       0.080   \\
100     &       20      &       0.061   &       0.066   &       0.062   & 0.046 & 0.050   &       0.256   &       0.128   &       0.110   & 0.080 &       0.089   \\ \hline
\end{tabular}
\vspace{3mm}
\caption{\emph{Root mean squared error of estimators in a factor-augmented panel data model. PCA refers to principal components analysis, IV denotes instrumental variable estimator, LAS denotes the LASSO estimator, GVE is the grouped variable estimator, and WGVE is the weighted grouped variable estimator. IEE refers to the interactive effects estimator and CCE refers to the comon correlated effects estimator.} \label{t2:mc}}
\end{small}
\end{center}
\end{table}

The focus of this investigation is on the estimation of the latent factor structure in the model, therefore we implement our procedure by first estimating the intercept and slopes of the observed part of equation \eqref{mc:eqy}. We employ two consistent estimators: the estimator for an interactive effects model (IEE) proposed by \citet{jB09}, and the mean group estimator for the common correlated effects model (CCE) developed by \citet{hP06}. Moreover, we evaluate the performance of the method in relation to the correlation between endogeneous variables and instruments. In Design 1, we assume that $\lambda_{1,i}$ is an i.i.d. random variable distributed as $\mathcal{U}[0.5,3.5]$, while in Design 2, $\lambda_{1,i}$ is an i.i.d. random variable distributed as $\mathcal{N}(0,1)$. Lastly, in Design 3, we generate $\lambda_{1,i} \sim \mathcal{N}(0,1) $, for $i=1,...,m$ and, following \citet{aC2011}, $\lambda_{1,i} = 0.5 \times \theta_i / \sum_i \theta_i$ for $i=m+1,...,N$, where $\theta_i \sim \mathcal{U}[0,1]$ and $m = 0.9 N$. 

We evaluate the performance of our proposed estimators for a factor-augmented model against standard approaches involving PCA and IV. As previously discussed, in the context of a factor-augmented linear model, we can conceive of the feasible estimation of the factor structure in two steps. First, a consistent estimate of the coefficients on the observed variables is necessary to generate residuals.  Second, we apply either PCA, IV, LASSO, or the grouped variable estimators proposed in this paper (GVE and WGVE) to estimate the latent factor structure. In Table \ref{t2:mc}, we present the small sample performance of different estimators measured by the RMSE, which is defined as in Table \ref{table0}. The results  show that the performance of the estimators proposed in this paper leads to significant improvements in terms of RMSE relative to the PCA and IV-type approaches.

\section{Applications using administrative student data} \label{empirical-application}

In order to demonstrate the usefulness of our methods, we now present three applications to models of educational attainment. Depending on the data and setting, the latent factors estimated by our methods can be interpreted as measures of unobserved teaching quality. Furthermore, we can quantify the distribution of latent ability of the students.  First, we investigate how the distribution of latent abilities changes over subsequent years of K12 education. Second, we investigate how the distribution of latent abilities changes over subsequent years of college education, and lastly we investigate the change of the distribution of student ability after the implementation of a voucher program designed to improve educational outcomes. While these examples rely on different data sets and come from different countries, they highlight the usefulness of our techniques in varied settings in order to quantify the unobserved dimensions of student and teacher quality.

\subsection{Educational opportunity in the US}

In our first example, we present evidence on the temporal and geographic variability of educational opportunity across the US by using administrative data from over 11,000 school districts \citep{reardon2019educational}. We can model the district-level test scores using the following model, which accounts for the impact of latent school-district and grade-level heterogeneity on educational attainment using a one-factor specification: 
\begin{equation}
y_{ij} = \mu_{i} + \mu_{j} + \lambda_{i} f_{j} + u_{ij}. \label{educ:eq0} 
\end{equation}
Here $y_{ij}$ is the average normalized test score in district $i$ in grade $j$. The model also includes district fixed effects, $\mu_{i}$, and grade fixed effects, $\mu_{j}$. In this model, $\lambda_{i}$ is associated with district educational attainment, and the factor $f_{j}$ is interpreted as measuring educational attainment by grade $j$. Moreover, the term $\lambda_i f_{j}$ represents the interaction between educational attainment in district $i$ and quality of instruction in grade $j$. The value of including these latent terms becomes evident once we consider that high grade teaching quality can have a modest effect on the educational attainment of relatively under-performing districts, although it can dramatically impact performance in over-performing districts. 

\begin{table}[hptb]
\begin{center}
\begin{small}
\begin{tabular}{c c c c c c c c c c c c c} \hline 
\multicolumn{1}{c}{Grade}& \multicolumn{6}{c}{Mathematics}&\multicolumn{6}{c}{Reading}\\
& \multicolumn{2}{c}{PCA}&\multicolumn{2}{c}{IV}& \multicolumn{2}{c}{WGVE}&\multicolumn{2}{c}{PCA}&\multicolumn{2}{c}{IV}& \multicolumn{2}{c}{WGVE} \\
   & $\hat{\theta}$ & $\Delta$\%  &  $\hat{\theta}$ & $\Delta$\% &  $\hat{\vartheta}$ & $\Delta$\% &  $\hat{\theta}$ & $\Delta$\% &  $\hat{\theta}$ & $\Delta$\% &  $\hat{\vartheta}$ & $\Delta$\% \\ \hline
3       &       1.000   &               &       1.000   &               &       0.930   &               &       1.000   &               &       1.000   &               &       0.953   &               \\
4       &       1.064   &       6.4     &       1.078   &       7.9     &       1.002   &       7.8     &       1.019   &       1.9     &       1.024   &       2.4     &       0.981   &       2.9     \\
5       &       1.056   &       -0.8    &       1.032   &       -4.3    &       0.990   &       -1.2    &       1.051   &       3.1     &       1.032   &       0.8     &       1.010   &       3.0     \\
6       &       1.059   &       0.3     &       0.997   &       -3.4    &       0.995   &       0.5     &       1.038   &       -1.2    &       0.974   &       -5.6    &       0.989   &       -2.1    \\
7       &       1.021   &       -3.6    &       0.949   &       -4.9    &       0.973   &       -2.2    &       1.008   &       -3.0    &       0.933   &       -4.2    &       0.967   &       -2.2    \\
8       &       0.984   &       -3.7    &       0.896   &       -5.6    &       0.924   &       -5.1    &       0.986   &       -2.1    &       0.900   &       -3.5    &       0.940   &       -2.8    \\
 \hline
\end{tabular}
\end{small}
\vspace{3mm}
\caption{\emph{Estimated factors by grade using the SEDA data. Principal Component Analysis is denoted by PCA, the instrumental variable estimator by IV, and the weighted grouped variable estimator by WGVE. $\Delta$\% denotes percentage change by grade} \label{table:us:factors}} 
\end{center}
\end{table}

To estimate the parameters in equation \eqref{educ:eq0}, we use data from the Stanford Education Data Archive (SEDA) for the year 2018 \citep{fahle2021stanford}. SEDA provides nationally comparable scores for school districts in the U.S. The data set includes information on mathematics and reading tests. Unfortunately, the availability of covariates that vary by grade and district is rather limited and it would reduce the sample size significantly. Thus, we do not include independent variables in the specification but account for district and grade fixed effects which we think will capture most of the relevant variation over a short time horizon.

We first estimate equation \eqref{educ:eq0} using fixed effects methods separately by subject. In the second stage, we use residuals $R_{ij} = \lambda_{i} f_{j} + u_{ij}$, and we estimate the factors and loadings following the the weighted grouped variable estimator (WGVE) with weights equal to $Q_J^{-1}$. In Table \ref{table:us:factors} we report the latent educational attainment for each grade (using grade 3 to normalize the results). We notice that for both mathematics and reading the three different estimators suggest a decreasing trend for higher grades. 

\begin{figure}
\begin{center}
\centerline{\includegraphics[width=.9\textwidth]{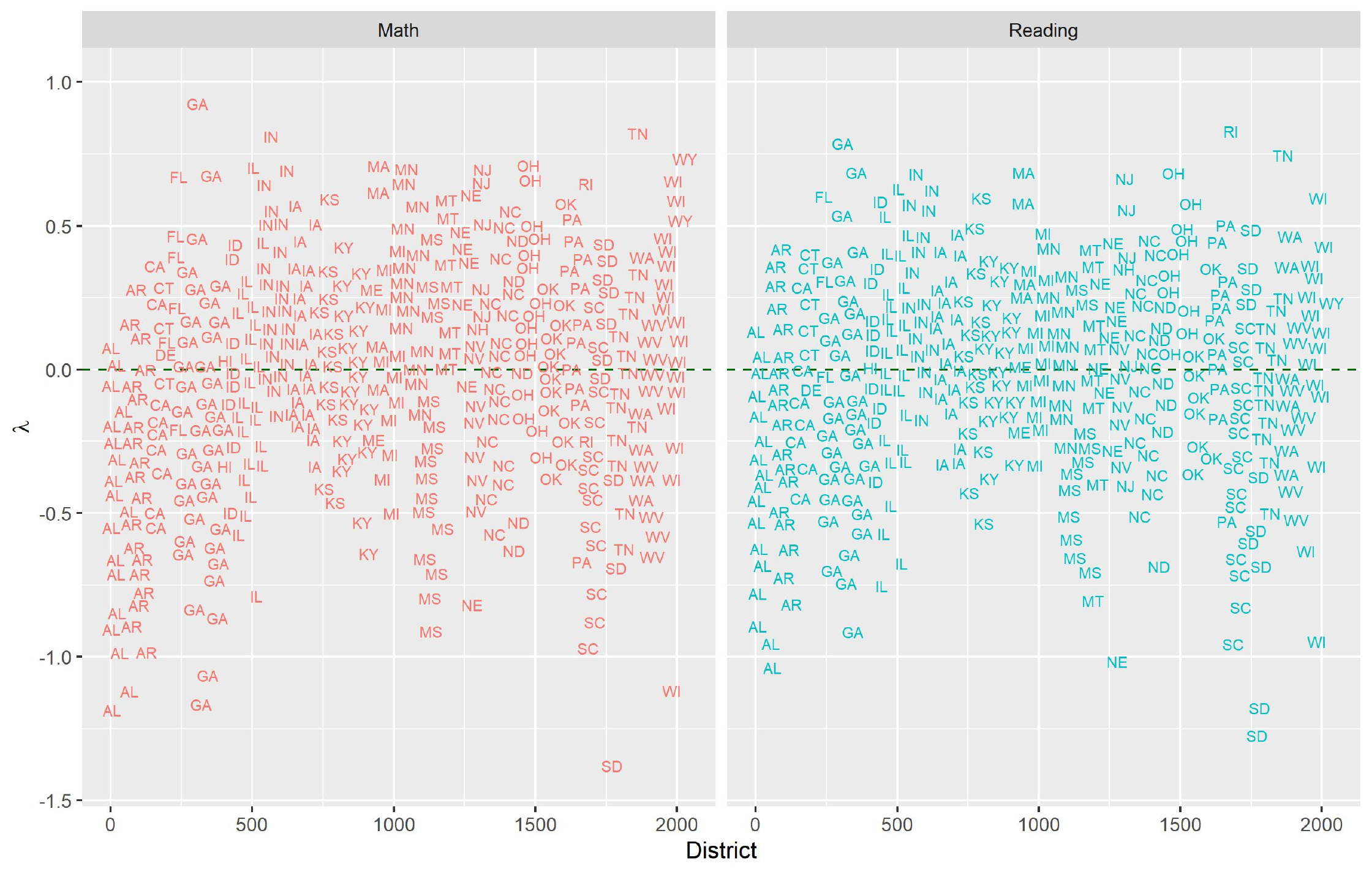}}
\caption{\emph{Geographical disparities and district educational attainment.} \label{f43:results}}
\end{center}
\end{figure}

In Figure \ref{f43:results} we plot the estimated model $\lambda$s for each school district grouped by state. The two counties with the lowest loadings for mathematics are Oglala Lakota County, SD (-1.379) and Todd County, SD (-1.367). They are the poorest and third poorest counties in the US respectively. In contrast the districts with the highest loadings are Forsyth County, GA (0.925) and Williamson County, TN (0.8218). Adjusted for cost of living they are some of the wealthiest counties in the US.  We also notice the distribution of estimated loadings by state. For Alabama the estimate loadings range from -1.209 to barely above zero at 0.074. In contrast the loadings for Massachusetts range from -0.207 to 0.590. It is worth noting that while Georgia has the district with the highest loading, it also has the 5-th lowest loading for Hancock County, GA (-1.396). The racial disparities between these two counties are particularly striking, and this difference is partially removed by the inclusion of the fixed effects. The population in Forsyth County is close to 90\% white and in Hancock County is 84\% African-American. The Forsyth school district is well-funded and uses technology extensively including tools that allow parents to monitor student assignments and grades 24 hours a day. (Detailed estimation results for both math and reading scores are available from the authors.)

\begin{figure}
\begin{center}
\centerline{\includegraphics[width=.8\textwidth]{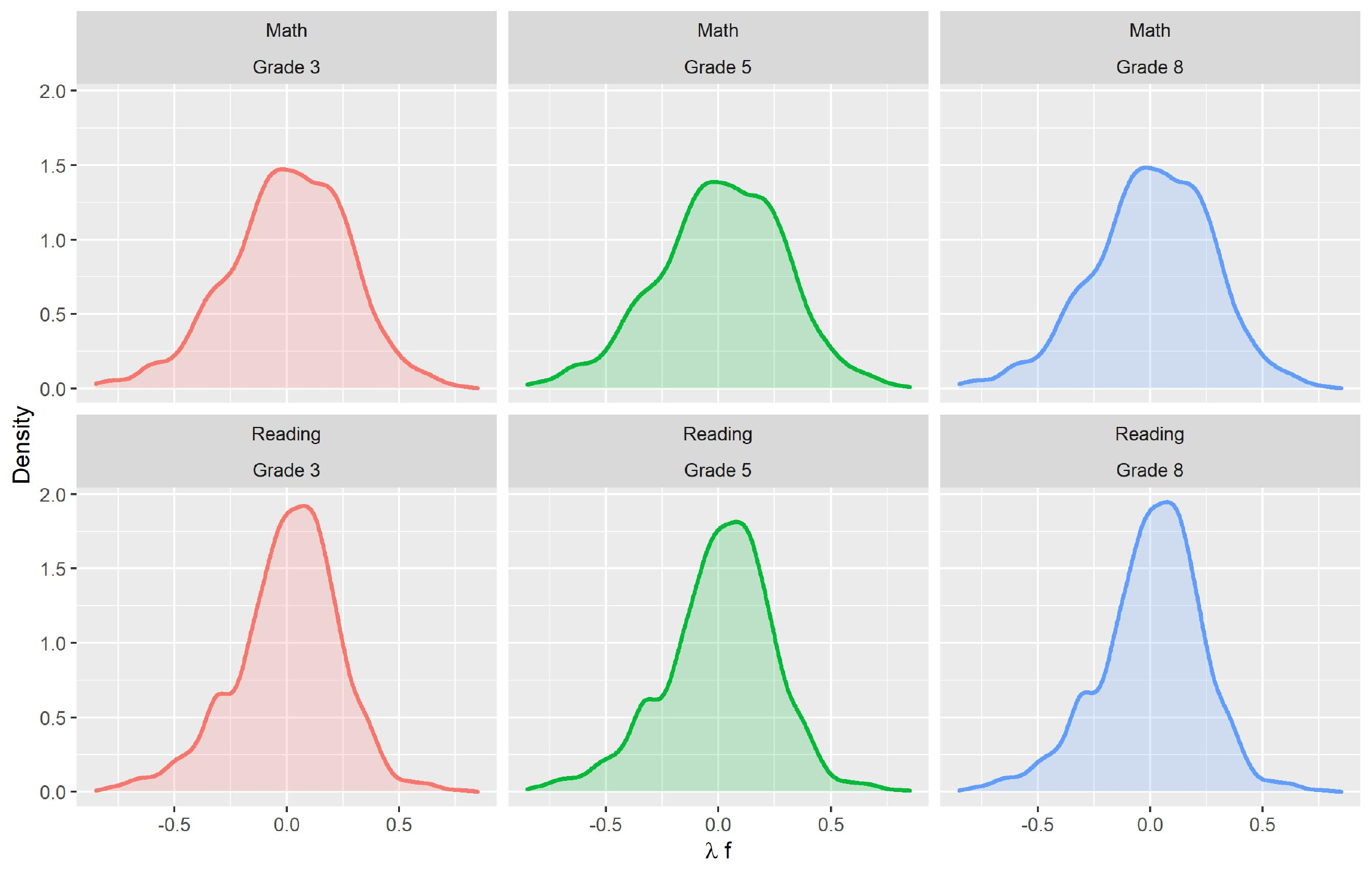}}
\caption{\emph{Changes in the distribution of district educational attainment by grade and subject.} \label{f42:results}}
\end{center}
\end{figure}

Figure \ref{f42:results} plots $\lambda_{i} f_{g}$ and allows us to evaluate the change in the distribution of unobserved district educational attainment by grade and by subject. The distributions appear to change little by grade, indicating the lack of significant differences in district educational achievement that could be attributed to the quality of education at different stages of the K-12 education system.

\subsection{Class size and college educational attainment}

Next, we use data from administrative records of the economics and finance programs at Bocconi University \citep{DeGiorgi2012} in order to estimate the following one-factor model of the effect of class size and socioeconomic class composition on educational attainment: 
\begin{equation}
y_{icj} = \bm{d}_{cj}'\bm{\alpha} + \bm{x}_{icj}' \bm{\beta} + \lambda_i f_{cj} + u_{icj}, \label{educ:eq1} 
\end{equation}
where $y_{icj}$ is the average grade of student $i$ in a class $c$ at year $j$, and $\bm{d}_{cj}$ is a vector of variables that includes class size, and measures of actual dispersion of gender and income in each class. The vector $\bm{x}_{icj}$ captures observed variables such as gender and income. In this model, $\lambda_{i}$ is associated with student motivation and ability, and the factor $f_{ct}$ is interpreted as measuring teaching quality of the course $c$ taken in year $j$. Moreover, the term $\lambda_i f_{cj}$ represents the interaction between student motivation $\lambda_{i}$ and the quality of the teacher in a class $f_{cj}$. The inclusion of interactive latent factors is considered important since it allows us to account for situations where high teaching quality can have a modest effect on the educational attainment of relatively unmotivated students, although it can dramatically affect performance among strong, motivated students. 

\vspace{3mm}
\begin{table}[hptb]
\begin{center}
\begin{tabular}{c c c c c c c c } \hline 
\multicolumn{2}{c}{}&\multicolumn{2}{c}{PCA}&\multicolumn{2}{c}{GVE}& \multicolumn{2}{c}{WGVE} \\
          &         &  $\hat{\theta}$ & $\Delta$\%  & $\hat{\theta}$ & $\Delta$\%   &  $\hat{\vartheta}$ & $\Delta$\%  \\ \hline
Course 1        &       Year 1  &       1.000   &        -      &       1.000   &       - &       0.973   &        -      \\
Course 2        &       Year 2  &       1.041   &       4.06    &       1.005   &       0.49    &       0.978   &       0.49    \\
Course 3        &       Year 3  &       1.076   &       3.35    &       1.083   &       7.74    &       1.053   &       7.74    \\ \hline
\end{tabular}
\vspace{3mm}
\caption{\emph{Estimated factors using the Bocconi's data. Principal Component Analysis is denoted by PCA, the grouped variable estimator by GVE, and the weighted grouped variable estimator by WGVE with equal weights.} \label{table:cs:factors}} 
\end{center}
\end{table}

The data set captures a rich set of covariates which are included in the specification. It includes information on course grades, background demographic and socioeconomic characteristics such us gender, family income, and pre-enrollment test scores. Additionally, the data set includes information on enrollment year, academic program, number of exams by academic year, official enrollment, official proportion of female students in each class, and official proportion of high income students in each class. We restrict our attention to students who matriculated in the 2000 academic year and took the same non-elective classes in the first three years of the program. See \citet{DeGiorgi2012} and \citet{harding2014estimating} for additional details on the data.

As in \citet{DeGiorgi2012}, we estimate $\bm{\alpha}$ and $\bm{\beta}$ in equation \eqref{educ:eq1} using instrumental variables generated by a random assignment of students into classes. Students were assigned to each class by the administration at Bocconi University, and therefore, the random assignment determine the actual class size, percentages of female students in a class and high income students in a class, which are considered to be endogenous variables. The use of the randomized assignment allows for the consistent estimation of the coefficients in equation \eqref{educ:eq1}, satisfying one of the conditions of our approach. In the second stage, we employ residuals $R_{icj} = \lambda_{i} f_{cj} + u_{icj}$, and we estimate the factors and loadings following the procedure described in Sections 2 and 3.

\begin{figure}
\begin{center}
\centerline{\includegraphics[width=0.7\textwidth]{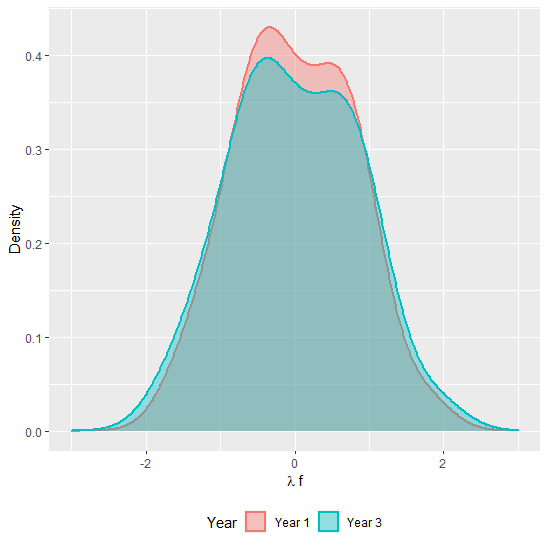}}
\caption{\emph{Changes in the distribution of student's grade by year in the program.} \label{f41:results}}
\end{center}
\end{figure}

Table \ref{table:cs:factors} shows the factors $f_{cj}$ estimated using PCA, GVE, and WGVE. Because $m_r=1$, the GVE and IV estimators are identical. While PCA suggests that teacher/course quality $f_{cj}$ does seem to improve linearly over years, the WGVE suggests that the quality of courses improves mainly in the third year of the program. In Figure \ref{f41:results}, we estimate the distribution of $\hat{\lambda}_i \hat{f}_{cj}$ by years in the program. It is interesting to see that the middle and upper tail of the distribution changes over time, and by the third year, the conditional distribution of the average grade becomes more dispersed. This finding suggest that the students who remained in the program became more heterogeneous and the latent abilities of the high-performing students improved over time. 

\subsection{Vouchers and educational attainment}\label{Colombia}

Lastly, we investigate the impact of an educational voucher program that provided opportunities for students to attend private schools. During past decades, numerous educational voucher programs were adopted in the U.S. and Latin America. The empirical literature focused on the evaluation of the effect of the program on observable outcomes \citep[see, e.g.][]{jAngrist2002,jAngrist2006,LAMARCHE2011278}, but the effect of such programs on latent variables such as cognitive ability of students is unknown. In this example, we illustrate the use of our estimation approach using data from \citet{jAngrist2002} concerning a 1991 program in Colombia. The vouchers were assigned using lotteries, and they were renewable as long as the students maintained satisfactory academic progress.

We estimate the following factor-augmented linear panel data model:
\begin{equation}
y_{is} = \alpha d_i + \bm{x}_{is}' \bm{\beta} + \lambda_i f_s + u_{is}, 
\end{equation}
where $y_{is}$ is student's $i$ test score in subject $s$ and $d_i$ indicates treatment status (i.e., whether student $i$ won the lottery). The parameter $\alpha$ is the mean treatment effect of the program. The vector of independent variables is denoted by $\bm{x}_{ij}$ and the error term by $u_{is}$. The loading $\lambda_i$ measures the student's intrinsic ability or effort that also drives performance in the three subjects, and the variable $f_s$ is a subject specific effect that impacts student achievement.

\vspace{3mm}
\begin{table}[hptb]
\begin{center}
\begin{tabular}{c c c c c  c c}  \hline 
\multicolumn{1}{c}{} & \multicolumn{3}{c}{Control} & \multicolumn{3}{c}{Treatment} \\ 
      & PCA               &  GVE          &  WGVE               & PCA    &       GVE     &       WGVE \\ \hline  
Mathematics     &        1.000  &        1.000  &        0.679  &        1.000   &        1.000  &        0.936  \\
Reading           &      1.575  &        2.679  &        1.850  &        0.970   &        2.522  &        1.184  \\
Writing     &    1.437  &        1.468  &        1.013  &        0.866  &        1.960   &        0.920  \\  \hline
\end{tabular}
\vspace{3mm}
\caption{\emph{Factors estimated from the Colombian voucher data (PACES). Principal Component Analysis is denoted by PCA, the grouped variable estimator by GVE, and the weighted grouped variable estimator by WGVE with equal weights.} \label{table:voucher:factors}} 
\end{center}
\end{table}

We use data on 284 students who took tests in mathematics, reading and writing. These tests were taken three years after the vouchers were distributed. To facilitate the comparison among subjects, the test scores are in standard deviation units. In addition to an indicator variable for whether the student won a voucher, we use the following independent variables: site dummies, strata indicators for whether the student lives in a neighborhood ranked on a scale of 1-6 from poorest to richest, an indicator for whether the interview was done by a house visit since telephones were used in the majority of the interviews, gender, age, and parents' schooling. We also include an indicator for survey form, because \citet{jAngrist2002} data also incorporate responses obtained from a pilot survey designed to test questions and interviewing strategies. 

Table \ref{table:voucher:factors} presents the factors for Mathematics, Reading and Writing. We estimate separately $f_s$ for students in the control and treatment group, to measure whether these factors differ by treatment status. The table also presents results using the estimation approaches introduced in this paper. The results for Mathematics and Writing in the control group are qualitatively similar when using PCA, GVE or WGVE. In contrast, WGVE estimates significant gains in Mathematics for the treatment group. The results do not seem to suggest improvements in the other subjects resulting from the treatment. 

\begin{figure}
\begin{center}
\centerline{\includegraphics[width=.7\textwidth]{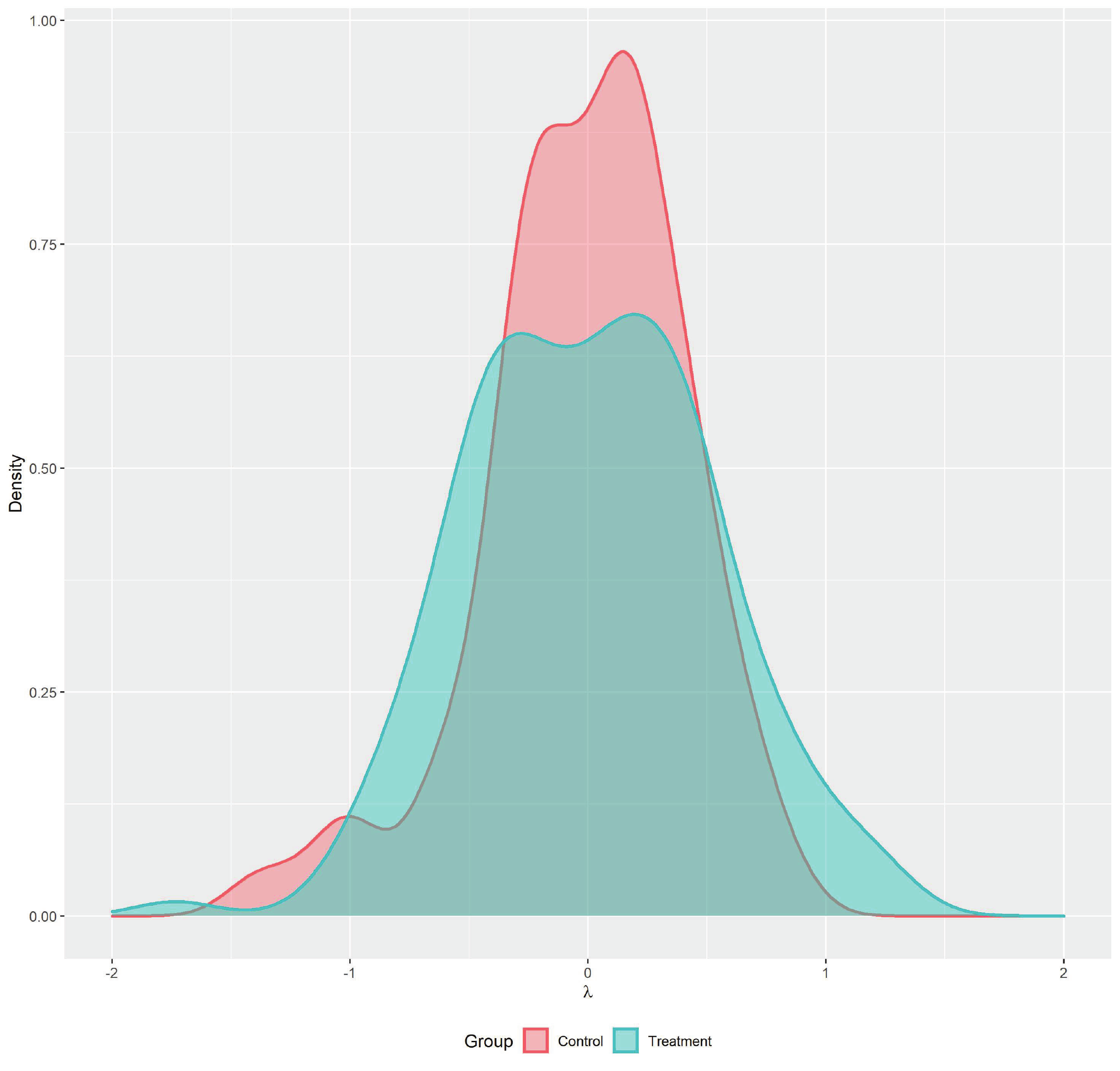}}
\caption{\emph{The effect of winning a voucher for private schooling on latent ability.} \label{f52:results}}
\end{center}
\end{figure}

Lastly, to summarize the impact of the voucher program on student achievement, we can evaluate the factor structure in our model of academic achievement. Figure \ref{f52:results} plots the distribution of student's latent ability by treatment status. The figure reveals that the educational policy implemented in Colombia improved latent cognitive outcomes of low-performing students and high-performing students, while increasing the gap between strong and weak students.  We also measure a difference of 0.016 between the values of $\lambda_i$ for students in the treatment and control groups.

\section{Conclusions and Discussion}

In this paper, we investigated the estimation of factor-augmented linear models using internally generated instruments in the spirit of \cite{madansky1964instrumental}, while addressing challenges such as the potentially large number of equally valid instruments. Given that many normalizations are possible for the identification of such a model, we explore the advantages of creating linear combinations of IV estimators, which leads to efficiency improvements.

While the proposed approach is computationally intensive and identification relies on correctly specifying the dependence between the latent factors and the error term, it nevertheless leads to a simple approach to estimating the latent factors in linear models. Further research may involve relaxing the identification assumptions to more general cases and to the extension of approximate factor models.

\appendix

\begin{appendices}

\section{Proofs}\label{app:proofs}

\begin{proof}[Proof of Theorem \ref{th:HL_version}.]

Our estimator is a 2SLS estimator that fits into the framework of \citet{bHansen2019}'s Section 5 considering $R_{n}$ equal to the identity matrix and  $d_n = 1$. Our cross-sectional units $i$ play the role of cluster units $g$ in their paper, and our $j \in A_0$ plays the role of their within-cluster observations. Thus, we proceed by showing that the instruments we propose are valid, and then verifying the conditions of Theorems 8 and 9 in \citet{bHansen2019}.


Condition (d) ensures that the estimand is well-defined, and that our condition (b) implies instrument validity, i.e. $ E\left[ \bm{Z}_{iA}' \bm{v}_{iA_0} \right] = \bm{0}$. To see this, recall that instruments are (averages of) regressors in $A_0 \cup A_J \cup B_J$ and (averages of) dependent variables in $B_J$, and that the error term $v_{ij}$ for $j \in A_0$ involves error terms from groups $j$ and partition $A_J$,
\begin{equation}
    v_{ij} = u_{ij} - \bm{\theta}_{j}' \overline{\bm{u}}_{iA_J}. \label{append:err}
\end{equation}
That any regressor is exogenous follows directly from the final part of assumption (b). To see that condition (b) is sufficient for the exogeneity of the instruments, let $h\in B_J$ and $j \in A_0$. By definition, we have a vector $\bar{\bm{y}}_{iB_J} = (\bar{y}_{iB_J,1}, \hdots, \bar{y}_{iB_J,r})'$ of $r$ averages with first element defined as $\bar{y}_{iB_J,1} = m_r^{-1} \sum_{h = 1}^{m_r} y_{ih}$, second element $\bar{y}_{iB_J,2} = m_r^{-1} \sum_{h = m_r+1}^{2 m_r} y_{ih}$, etc. Then, $$E (\bar{y}_{iB_J,k}  v_{ij} ) =  \frac{1}{m_r} \sum_{h = k m_r+1}^{k  m_r} E (y_{ih} v_{ij}).$$ 
Using Assumption (a) and equation \eqref{append:err}, for each $h$, we have
\begin{align*}
E\left(y_{ih} v_{ij}\right) &= E\left(\left(\bm{x}_{ih}'\bm{\beta}+\bm{\lambda}_{i}'\bm{f}_{h} + u_{ih} \right) \left((u_{ij} - \bm{\theta}_{j}' \overline{\bm{u}}_{iA_J} \right)\right) \\
&= E\left(\bm{\lambda}_{i}'\bm{f}_{h} u_{ij}\right) + E\left(\bm{x}_{ih}'\bm{\beta} u_{ij} \right) + E\left(u_{ih} u_{ij} \right) \\
&- E\left(\bm{\lambda}_{i}'\bm{f}_{h} \bm{\theta}_{j}' \overline{\bm{u}}_{iA_J} \right) \\
&- E\left(\bm{x}_{ih}'\bm{\beta} \bm{\theta}_{j}' \overline{\bm{u}}_{iA_J} \right) \\
&- E\left(u_{ih} \bm{\theta}_{j}' \overline{\bm{u}}_{iA_J}\right) \\
&= 0,
\end{align*}
where the terms in the first line after the second equality are zero because of the first, second, and third component of Assumption (b), and because $A_J \cap B_J = \emptyset$ and that $\bm{\beta}$ and $\bm{f_h}$ are non-random. The other terms are treated similarly.

It remains to verify the conditions in Theorem 9 of \citet{bHansen2019} (which imply those in Theorem 8). Their Assumption 2 holds by construction because our panel is balanced and because $m_{A_0}$ is fixed. The full rank and minimum eigenvalue assumptions on $\bm{Q}_N$, $\bm{W}_N$, and $\bm{\Omega}_N$ are directly assumed in the statement of our result, via condition (e).

We check that for some $s>2$, the dependent variable, regressors, and instrumental variables have bounded $2s$th moments. For the dependent variable, this is assumed in our condition (c). For the regressors and instruments, note that they are either (averages of) dependent variables, or (averages of) regressors. If they are not averages, our assumption (c) directly assumes that the $2s$th moment is bounded. If they are averages (over $A_J$ or $B_J$), it follows from our Assumption
(c) and a $c_{r}$-inequality. To see this,
\begin{align*}
\sup_{i}E\left|\overline{\bm{y}}_{iA_J}\right|^{2s} & =\sup_{i}E\left|\frac{1}{m_{A_{J}}}\sum_{k\in A_{J}}y_{ik}\right|^{2s}\\
 & \leq\sup_{i}m_{A_{J}}^{2s-1}\sum_{k\in A_{J}}E\left|\frac{1}{m_{A_{J}}}y_{ik}\right|^{2s}\\
 & =\sup_{i}\frac{1}{m_{A_{J}}}\sum_{k\in A_{J}}E\left|y_{ik}\right|^{2s}\\
 & \leq\sup_{i,k}E\left|y_{ik}\right|^{2s}<\infty
\end{align*}
where the first equality is the definition of $\overline{\bm{y}}_{iA_J}$,
the second inequality is the $c_{r}$ inequality, the third equality moves out the $m_{A_{J}}$ from the absolute value and cancels it against the term in front of the sum; the fourth inequality uses that the supremum of the moment over all periods is at least as big as the moment in any given time periods, and the boundedness follows from our assumption (c). The argument for $\overline{\bm{y}}_{iB_J}$ and for each element of averaged regressors is almost identical. Because all elements of the matrices of regressors and instrumental variables have bounded $2s$ moments, so do the matrices.

\end{proof}

\begin{proof}[Proof of Theorem \ref{thm:MGVE-fixed-T}]
We begin establishing consistency, and then, in the second part of the proof, we obtain the asymptotic distribution of the WGVE. Using equations \eqref{eq:newNST} and \eqref{MGVE}, we have 
\begin{equation*}
\widehat{\bm{\vartheta}}_{A_0} - \bm{\vartheta}_{A_0} = \sum_{q=1}^{Q_J} W_q \left( \widehat{\bm{\theta}}_{A_0,(q)} - \bm{\theta}_{A_0,(q)} \right).
\end{equation*}
It follows then that
\begin{equation}
\| \widehat{\bm{\vartheta}}_{A_0} - \bm{\vartheta}_{A_0} \| \leq \sum_{q=1}^{Q_J} \| W_q  \|  \| \widehat{\bm{\theta}}_{A_0,(q)} - \bm{\theta}_{A_0,(q)} \|.
\end{equation}
To show that the estimator is consistent, we need to show that $\| \widehat{\bm{\theta}}_{A_0,(q)} - \bm{\theta}_{A_0,(q)} \| = o_p(1)$, which can be established using Theorem \ref{th:HL_version} under Assumption (i). The result follows since $Q_J$ is fixed and the weights are bounded by condition (iii).

To show asymptotic normality, we need to establish the limiting distribution of
\begin{eqnarray}
\sqrt{N} \left( \widehat{\bm{\vartheta}}_{A_0} - \bm{\vartheta}_{A_0} \right)  & = & \sum_{q=1}^{Q_J} W_q  \sqrt{N} \left( \widehat{\bm{\theta}}_{A_0,(q)} - \bm{\theta}_{A_0,(q)} \right) \nonumber \\ 
& = & \sum_{q=1}^{Q_J} W_q  \bm{\Sigma}_{N,(q)}^{1/2}  \bm{\Sigma}_{N,(q)}^{-1/2} \sqrt{N} \left( \widehat{\bm{\theta}}_{A_0,(q)} - \bm{\theta}_{A_0,(q)} \right) \nonumber \\ 
 & = &  \sum_{q=1}^{Q_J} W_q \bm{\Sigma}_{N,(q)}^{1/2} \hat{\bm{\xi}}_{N,(q)}, \label{eq:MGVE-AN-fixedT}
\end{eqnarray}
where $\hat{\bm{\xi}}_{N,(q)} = \bm{\Sigma}_{N,(q)}^{-1/2} \sqrt{N} ( \widehat{\bm{\theta}}_{A_0,(q)} - \bm{\theta}_{A_0,(q)})$ is an asymptotically normal random variable by Theorem \ref{th:HL_version}. The results follows since the right hand side of \eqref{eq:MGVE-AN-fixedT} includes a weighted sum of a finite number of asymptotically normal random variables and those weights are bounded by assumptions (ii) and (iii). 
\end{proof}

\begin{proof}[Proof of Theorem \ref{T3}]
The proof has three parts. First, we demonstrate the consistency of the estimator defined as the solution of \eqref{GVE3}. In the second part of the proof, we show that the WGVE as defined in \eqref{MGVE-largeT} is also consistent. Lastly, we  establish the asymptotic normality of the estimator. 

The proof for the consistency of \eqref{GVE3} follows directly from \citet*{Belloni2012}, and it requires to verify that our assumptions satisfy conditions AS and CF in their paper. Equation \eqref{AS} is similar to the approximate sparsity (AS) condition in \citet*{Belloni2012}, which imposes a uniform upper bound $s$ for the number of variables approximating conditional expectation functions. In terms of the behavior of the population Gram matrix, we verify that condition RE in \citet*{Belloni2012} is Assumption (b). 

Moreover, Assumption (c) states a set of sufficient conditions that are comparable to Condition RF in \citet*{Belloni2012}. In our factor model for normalization $q$, we have that $$Cov\left( R_{ij,(q)}, R_{kj,(q)} \right) =  Cov \left( \left( \bm{f}_{j,(q)}' \bm{\lambda}_{i} + u_{ij,(q)} \right), \left( \bm{f}_{j,(q)}' \bm{\lambda}_{k} + u_{kj,(q)} \right) \right) = 0$$ because $\bm{f}_{j,(q)}$ is a parameter, $\bm{\lambda}_{i}$ and $u_{is}$ are independent and $u_{is}$ is independent within $i$ by Assumption (b) in Theorem \ref{th:HL_version}. The variable $R_{ij,(q)}$ is bounded under Assumption (c). Moreover,
\begin{eqnarray*}
\tilde{R}_{il,(q)} & = & R_{il,(q)} - \frac{1}{N} \sum_{i=1}^N E\left( R_{il,(q)} \right) = R_{il,(q)} - \frac{1}{N} \sum_{i=1}^N E\left( \bm{R}_{i,B_J,(q)} \bm{\pi}_{l,(q)} + \epsilon_{il,(q)} \right) \\
                                                                                & = & R_{il,(q)} - \bar{\bm{R}}_{B_J,(q)} \bm{\pi}_{l,(q)} = \bm{\pi}_{l,(q)}' \left( \bm{R}_{i,B_{J},(q)} - \bar{\bm{R}}_{B_{J},(q)} \right) + \epsilon_{il,(q)},
\end{eqnarray*}
where $\bar{\bm{R}}_{B_{J},(q)} = \sum_{i=1}^N E\left( \bm{R}_{iB_{J},(q)} \right)$. The independence of the vector of normalized endogenous variables follows from the first part of Assumption (c) and Assumption (a) in Theorem \ref{th:HL_version}.

We now follow closely Theorem 1 and Lemma 1 in \citet*{linton2016}, and here we focus on the main differences. First, we write
\begin{equation}
\| \hat{\bm{\vartheta}}_j - \bm{\vartheta}_j \| \leq \sum_{q \in \mathcal{Q}_J^\ast} \| W_{q} \| \times \max_{q \in \mathcal{Q}_J^\ast} \| \hat{\bm{\theta}}_{j,(q)} - \bm{\theta}_{j,(q)} \| + \sum_{q = Q_J^\ast + 1}^{Q_J} \| W_{q} \| \Delta,
\end{equation}
where $\Delta$ is the radius of the compact set $\bm{\Theta}$. By Assumption (a), the last term does $o_p(1)$ as $J \to \infty$, so we concentrate in the first term. To show that the estimator $\hat{\bm{\vartheta}}_j$, which based on a linear combination of consistent estimators, is consistent, we need to show that $\max_{q \in \mathcal{Q}_J^\ast} \| \hat{\bm{\theta}}_{j,(q)} - \bm{\theta}_{j,(q)} \| = o_p(1)$ as $N$ and $J$ go jointly to $\infty$, because the weights are bounded by assumption (a). 

By Assumption (d), if  $\max_{q \in \mathcal{Q}_J^\ast} \| \hat{\bm{\theta}}_{j,(q)} - \bm{\theta}_{j,(q)} \|  > \eta$, we have that $\| G_q (\hat{\bm{\theta}}_{j}) \| \geq \epsilon_N(\eta)$ for some $q$. Therefore,
\begin{equation}
\mbox{Pr} \left(  \max_{q \in \mathcal{Q}_J^\ast} \| \hat{\bm{\theta}}_{j,(q)} - \bm{\theta}_{j} \|  > \eta \right) \leq \mbox{Pr} \left( \max_{q \in \mathcal{Q}_J^\ast} \| G_q (\hat{\bm{\theta}}_{j}) \|  \geq  \epsilon_N(\eta) \right).
\end{equation}

For $\epsilon_N(\eta)>0$, 
\begin{eqnarray*}
\max_{q \in \mathcal{Q}_J^\ast} \| G_q (\hat{\bm{\theta}}_{j}) \| & \leq & \max_{q \in \mathcal{Q}_J^\ast} \|  G_q(\hat{\bm{\theta}}_{j}) - G_{N,(q)}(\hat{\bm{\theta}}_{j)}) \| + \max_{q \in \mathcal{Q}_J^\ast} \|  G_{N,(q)}(\hat{\bm{\theta}}_{j}) \| \\
& \leq & \max_{q \in \mathcal{Q}_J^\ast}  \sup_{ \theta_j \in \Theta_j}  \|  G_q(\bm{\theta}_{j}) - G_{N,(q)}( \bm{\theta}_{j}) \| + \max_{q \in \mathcal{Q}_J^\ast} \|  G_{N,(q)}(\hat{\bm{\theta}}_{j}) \| \\
& = & o_p(\alpha_{N}) + \max_{q \in \mathcal{Q}_J^\ast} \|  G_{N,(q)}(\hat{\bm{\theta}}_{j}) \|,
\end{eqnarray*}
by Assumption (e). The consistency result follows by definition of $\hat{\bm{\theta}}_{j,(q)}$ and $\epsilon_{N}(\eta) \to 0$ as $N$ and $J$ go to infinity. 

Under similar conditions, Theorem 3 in \citet*{Belloni2012} demonstrate that the estimator $\hat{\bm{\theta}}_{j,(q)}$ is asymptotically normal, and this implies that the weighted sum over $Q_J^\ast$ is asymptotically normal under conditions (a) and (c). Because the binomial coefficient $Q_J \to \infty$ rapidly as $J \to \infty$, but the standarized binomial coefficient $Q_J^\ast \to C$, the estimator is a weighted average of a finite number of normalizations. It follows that as both $N$ and $J \to \infty$ under the rates in condition (c) established for the Lasso-type estimator in \citet*{Belloni2012}, $\widehat{\bm{\vartheta}}_{j}$ is asymptotically normal with covariance matrix, 
\begin{equation*}
\bm{\Omega}_j = \lim_{N \to \infty} N \sum_{q=1}^{C} \sum_{l=1}^{C} W_{q} E\left( \left(\widehat{\bm{\theta}}_{j,(q)}-\bm{\theta}_{j,(q)}\right) \left(\widehat{\bm{\theta}}_{j,(l)}-\bm{\theta}_{j,(l)}\right)' \right) W_{l}'.
\end{equation*}

\end{proof}

\end{appendices}

\bibliographystyle{econometrica}
\bibliography{hlm}

\end{document}